\theoremstyle{plain}
\newtheorem{lemma}{Lemma}[section]
\newtheorem{proposition}[lemma]{Proposition}
\newtheorem{theorem}[lemma]{Theorem}
\theoremstyle{definition}
\newtheorem{definition}[lemma]{Definition}
\def\bstr{b}
\def\bfstr{bf}
\def\cstr{c}
\def\fstr{f}
\def\strLst{A,B,C,D,d,E,F,G,H,I,J,K,L,M,N,O,P,Q,R,S,T,U,V,W,X,Y,Z}
\newcommand{\MkB}[1]{\expandafter\def\csname\bstr#1\endcsname{\mathbb{#1}}}
\strLst\do{%
    \expandafter\MkB \i     }
\newcommand{\MkBF}[1]{\expandafter\def\csname\bfstr#1\endcsname{\mathbf{#1}}}
\strLst\do{%
    \expandafter\MkBF \i     }
\newcommand{\MkCal}[1]{\expandafter\def\csname\cstr#1\endcsname{\mathcal{#1}}}
\strLst\do{%
    \expandafter\MkCal \i     }
\newcommand{\MkFrak}[1]{\expandafter\def\csname\fstr#1\endcsname{\mathfrak{#1}}}
\strLst\do{%
    \expandafter\MkFrak \i     }
\newcommand{\Lin}[1]{\mathop{\mathsf{Lin}}(#1)}
\newcommand{\RMatchGT}[3]{\mathcal{M}^{{\text{\tiny $#1$}}}_{#2}(#3)}
\newcommand{\obj}[1]{\mathsf{obj}(#1)}
\newcommand{\mIO}{\mathop{\varnothing}}
\newcommand{\tMatchGT}[3]{\mathsf{MT}^{#3}_{#1}(#2)}
\newcommand{\TcompGT}[4]{#1\;{}^{#2}\!\!\!{\angle}_{#4}\; #3}
\newcommand{\tapGT}[3]{#1 \diamond_{#3} #2}
\newcommand{\ostar}{\mathbin{\mathpalette\make@circled\star}}
\newcommand{\make@circled}[2]{%
  \ooalign{$\m@th#1\smallbigcirc{#1}$\cr\hidewidth$\m@th#1#2$\hidewidth\cr}%
}
\newcommand{\smallbigcirc}[1]{%
  \vcenter{\hbox{\scalebox{0.77778}{$\m@th#1\bigcirc$}}}%
}
\newcommand{\actto}{\rightarrow\Mapsfromchar}
\DeclareMathAlphabet{\mathbbe}{U}{bbold}{m}{n}
\newcommand{\simplexcategory}{\mathbbe{\Delta}}
\newcommand{\op}{^{\text{{\rm{op}}}}}
\DeclareRobustCommand\upperstar{%
  \mathchoice%
    {\kern0pt\raise0.55ex\hbox{$\displaystyle *$}\kern0.8pt}%
    {\kern0pt\raise0.58ex\hbox{$\textstyle *$}\kern0.8pt}%
    {\kern0pt\raise0.45ex\hbox{$\scriptstyle *$}\kern0.4pt}%
    {\kern0pt\raise0.4ex\hbox{$\scriptscriptstyle *$}\kern0.2pt}%
}%
\DeclareRobustCommand\lowerstar{%
  \mathchoice%
    {\kern0pt\raise-0.65ex\hbox{$\displaystyle *$}\kern0.8pt}%
    {\kern0pt\raise-0.68ex\hbox{$\textstyle *$}\kern0.8pt}%
    {\kern0pt\raise-0.55ex\hbox{$\scriptstyle *$}\kern0.4pt}%
    {\kern0pt\raise-0.5ex\hbox{$\scriptscriptstyle *$}\kern0.2pt}%
}%
\newcommand{\lowershriek}{_!}
\def\overarrow#1{{\vec{#1}}}
\def\nondeg{\overarrow}
\newcommand{\isopil}{\stackrel{\raisebox{0.1ex}[0ex][0ex]{\(\sim\)}}%
			{\raisebox{-0.15ex}[0.28ex]{\(\rightarrow\)}}}
\providecommand{\kat}[1]{\text{\textbf{\textsl{#1}}}}
\newcommand{\Grpd}{\kat{Grpd}}
\newcommand{\N}{\mathbb{N}}
\newcommand{\SSS}{\mathsf{S}}
\DeclareRobustCommand\upperstar{%
  \mathchoice%
    {\kern0pt\raise0.55ex\hbox{$\displaystyle *$}\kern0.8pt}%
    {\kern0pt\raise0.58ex\hbox{$\textstyle *$}\kern0.8pt}%
    {\kern0pt\raise0.45ex\hbox{$\scriptstyle *$}\kern0.4pt}%
    {\kern0pt\raise0.4ex\hbox{$\scriptscriptstyle *$}\kern0.2pt}%
}%
\DeclareRobustCommand\lowerstar{%
  \mathchoice%
    {\kern0pt\raise-0.65ex\hbox{$\displaystyle *$}\kern0.8pt}%
    {\kern0pt\raise-0.68ex\hbox{$\textstyle *$}\kern0.8pt}%
    {\kern0pt\raise-0.55ex\hbox{$\scriptstyle *$}\kern0.4pt}%
    {\kern0pt\raise-0.5ex\hbox{$\scriptscriptstyle *$}\kern0.2pt}%
}%
\def\overarrow#1{{\vec{#1}}}
\def\nondeg{\overarrow}
\providecommand{\kat}[1]{\text{\textbf{\textsl{#1}}}}
\newcommand{\X}{\mathbf{X}}
\newcommand{\Y}{\mathbf{Y}}
\newcommand{\Z}{\mathbf{Z}}
\newcommand{\ti}[1]{%
 \ensuremath{\vcenter{\hbox{\includegraphics{diagrams/#1.pdf}}}}%
}
\colorlet{h1color}{blue!70!black} %
\colorlet{h2color}{orange!90!black} %
\colorlet{h3color}{blue!40!white} %
\colorlet{h4color}{green!40!black} %
\definecolor{qBlue}{RGB}{92,92,214}
\definecolor{qGreen}{RGB}{92,214,92}  %
\title{Tracelet Hopf Algebras and Decomposition spaces\\(Extended 
Abstract)}
\author{Nicolas Behr
\institute{Universit\'{e} Paris Cit\'{e}, CNRS, IRIF}
\email{nicolas.behr@irif.fr}
\and
Joachim Kock\thanks{Supported by grants MTM2016-80439-P (AEI/FEDER, UE) of Spain and
2017-SGR-1725 of Catalonia.}
\institute{Universitat Aut\`onoma de Barcelona \\ \& Centre de Recerca 
Matem\`atica}
\email{kock@mat.uab.cat}
}
\begin{document}
\maketitle

\begin{abstract}
Tracelets are the intrinsic carriers of causal information in
categorical rewriting systems. In this work, we assemble tracelets
into a symmetric monoidal decomposition space, inducing a
cocommutative Hopf algebra of tracelets. This Hopf algebra captures
important combinatorial and algebraic aspects of rewriting theory,
and is motivated by applications of its representation theory to
stochastic rewriting systems such as chemical reaction networks.
\end{abstract}

\section{Introduction}

Double-Pushout (DPO)~\cite{ehrig:2006aa} and more generally compositional 
categorical rewriting systems~\cite{nbSqPO2019, Behr2021compositionality} provide a 
versatile and mathematically sound framework for modeling complex transition 
systems, with a paradigmatic example being the modeling of reaction systems in biochemistry~\cite{Boutillier:2018aa} and in organic chemistry~\cite{BKAM2021}. %
The specification of an individual rewriting operation (\emph{direct derivation}) in essence amounts to providing a rewrite rule, i.e., a span of monomorphisms, that acts as a sort of template for the operation, together with a \emph{match}, which permits to specify the location within a host object where the local replacement operation is to be performed. In practical applications, it is often the case that the rewriting rules themselves involve only comparatively small graph-like objects. In contrast, the host objects to which the rewrites are applied could easily be several orders of magnitude larger, so that an enormous number of matches may be possible for a given rule and a given host object. 

A natural and powerful approach to overcome this fundamental problem consists in focusing on the combinatorial, statistical and structural properties of \emph{interactions} of rewriting rules within derivation traces, and to aim for a classification of traces in terms of ``interaction patterns''. %
Unlike in compositional diagrammatic calculi such as in particular the theory of string diagrams, the key obstacle for such a type of analysis in rewriting theories resides in the fact that two given rules may in general interact in a multitude of ways, i.e., there does not exist a notion of deterministic rule composition. %
Instead, as first demonstrated in~\cite{bdg2016} and further 
developed in~\cite{nbSqPO2019,bp2019-ext,BKAM2021, 
Behr2021compositionality}, it is necessary to define a notion of 
non-deterministic rule composition via a form of recursive 
application of the concurrency theorem, which then indeed permits
to reason statically about classes of rule compositions. %
Taking inspiration from the notion of \emph{pathways} in chemical reaction systems, this approach was then further refined in~\cite{behr2019tracelets} to the notion of \emph{tracelets}, which in essence act as the carriers of causal information in derivation sequences. %

The main objective of the present paper is to establish a principled mathematical approach to formalize the \emph{combinatorics} of tracelets. Generalizing results of~\cite{bdgh2016} on rewriting systems over directed multi-graphs to the categorical rewriting theory setting, we will demonstrate that it is indeed the notion of \emph{combinatorial Hopf algebras} that naturally captures the rich structure of tracelets. Apart from a rewriting-theoretic construction of the Hopf algebras (Section~\ref{sec:Hopf}), we report on our original discovery that this at first sight seemingly ad hoc construction is in fact interpretable in terms of the theory of \emph{decomposition spaces} (Sections~\ref{sec:DSRR} and~\ref{sec:DST}). Our motivation for this approach has been the analogy between the inductive definition of tracelets and the decomposition-space axioms in homotopy combinatorics. The benefit
of this `detour' is to situate the tracelet Hopf algebra in a
general framework covering most combinatorial Hopf algebras, thereby
exhibiting the constructions and proofs 
as instances of general ideas.

Decomposition spaces were introduced in combinatorics by G\'alvez,
Kock and Tonks~\cite{Galvez-Kock-Tonks151207573, Galvez-Kock-Tonks151207577} as a far-reaching homotopical generalization of posets for the purpose of incidence
algebras and M\"obius inversion, and independently by Dyckerhoff and
Kapranov~\cite{Dyckerhoff-Kapranov12123563} in homological algebra and representation theory, motivated mainly by the Waldhausen S-construction and Hall algebras.
Classically, since the work of Rota~\cite{RotaMoebius} in the 1960s, incidence algebras
were defined from posets through the process of decomposing
intervals, a construction which can fruitfully be formulated in terms of
the nerve of a poset. However, a great many combinatorial co- and
Hopf algebras are not of incidence type, meaning that they are not
the incidence coalgebra of a poset. The basic observation of
\cite{Galvez-Kock-Tonks151207573} is that simplicial objects more general than nerves of
posets admit the incidence coalgebra construction, allowing most of
the basic features of the theory to carry over --- and that most
combinatorial Hopf algebras do arise as incidence Hopf algebras of
(monoidal) decomposition spaces.

\section{Double-Pushout rewriting and tracelet theory}

\begin{figure}
\centering
\subfigure[A direct derivation sequence of length $5$ (with edge creation/deletion 
rules, and where ``wires'' indicate matches).\label{fig:ddExample}]{
$\vcenter{\hbox{\includegraphics[scale=0.1]{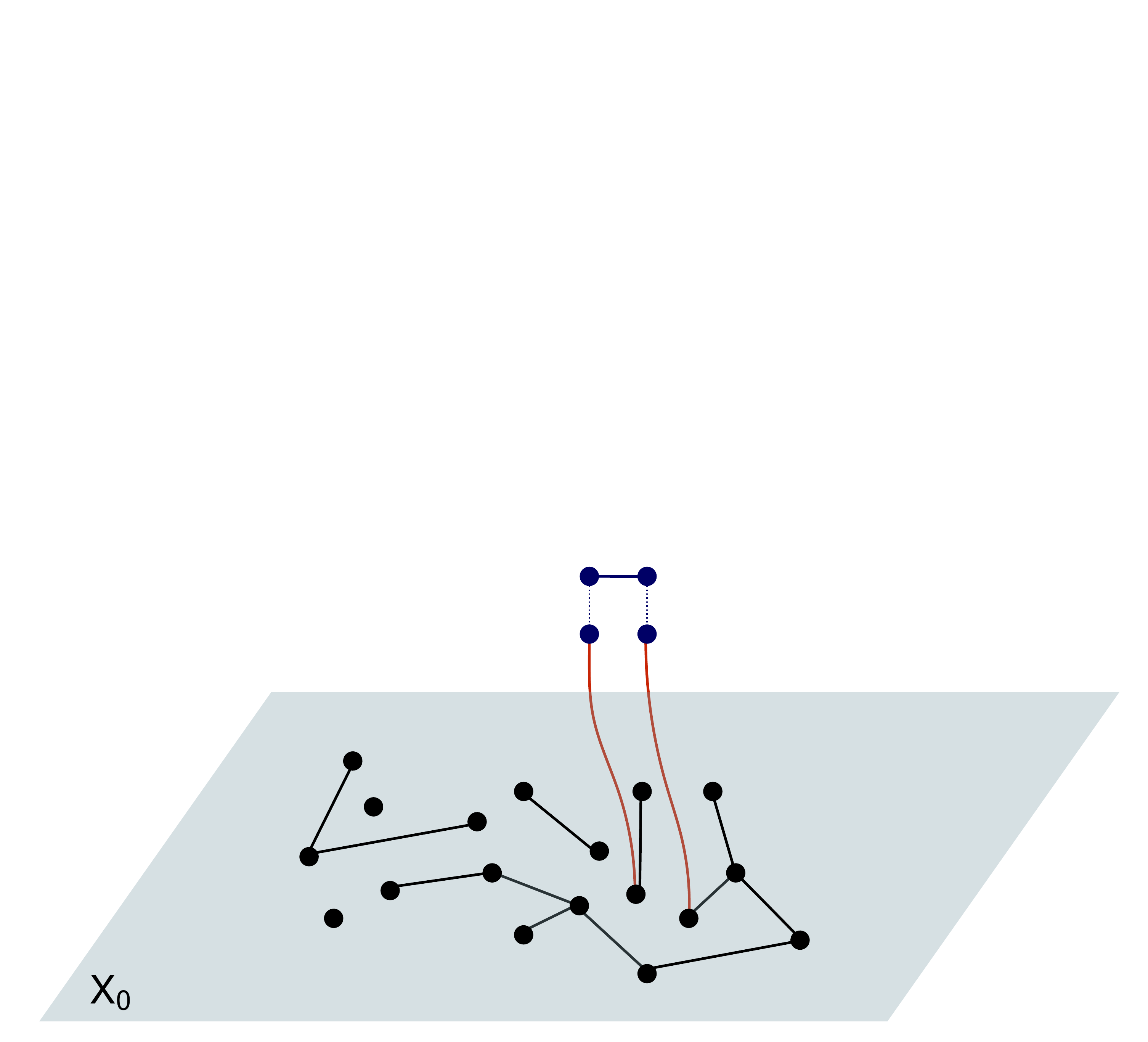}}}$
$\vcenter{\hbox{\includegraphics[scale=0.1]{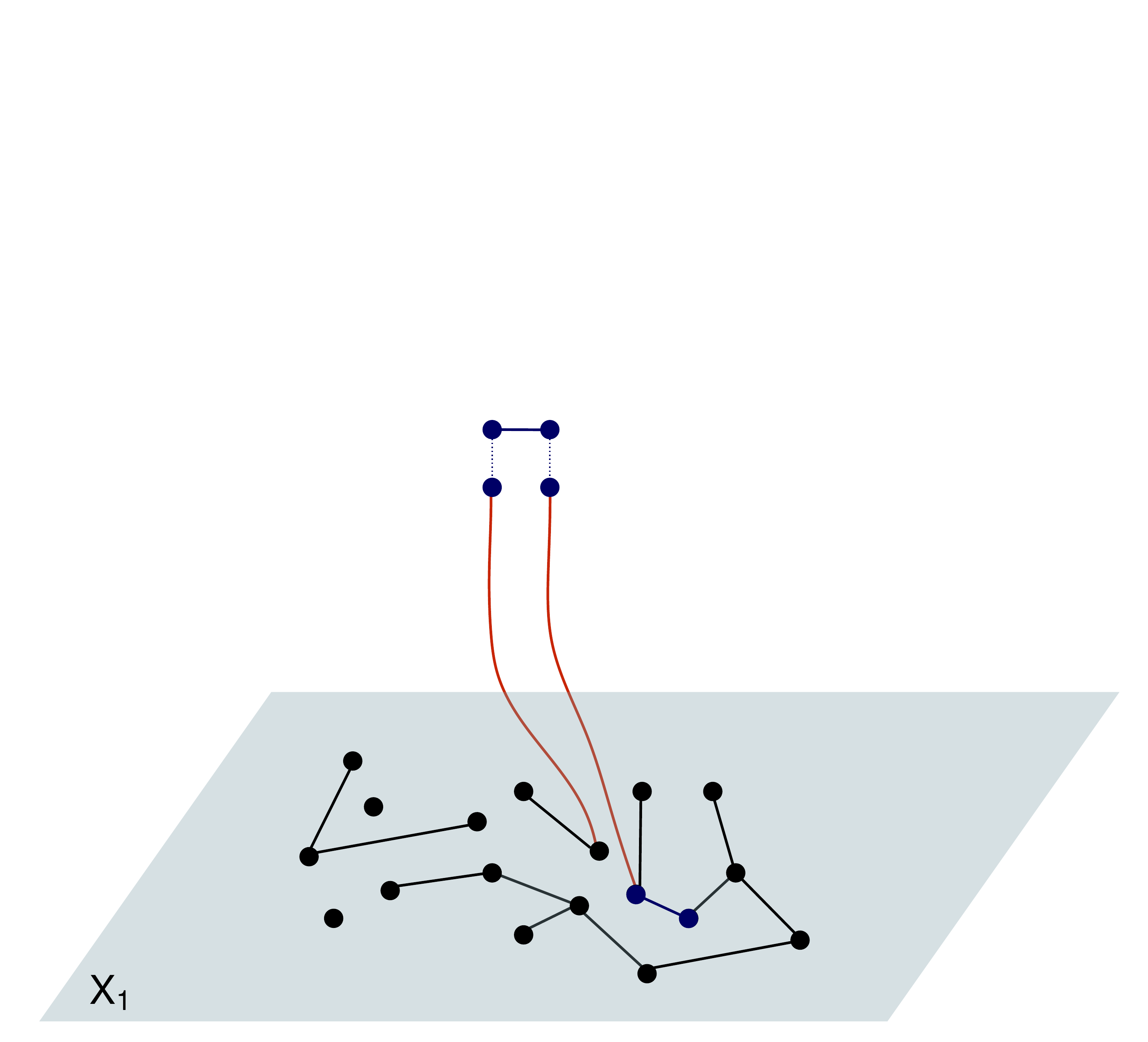}}}$
$\vcenter{\hbox{\includegraphics[scale=0.1]{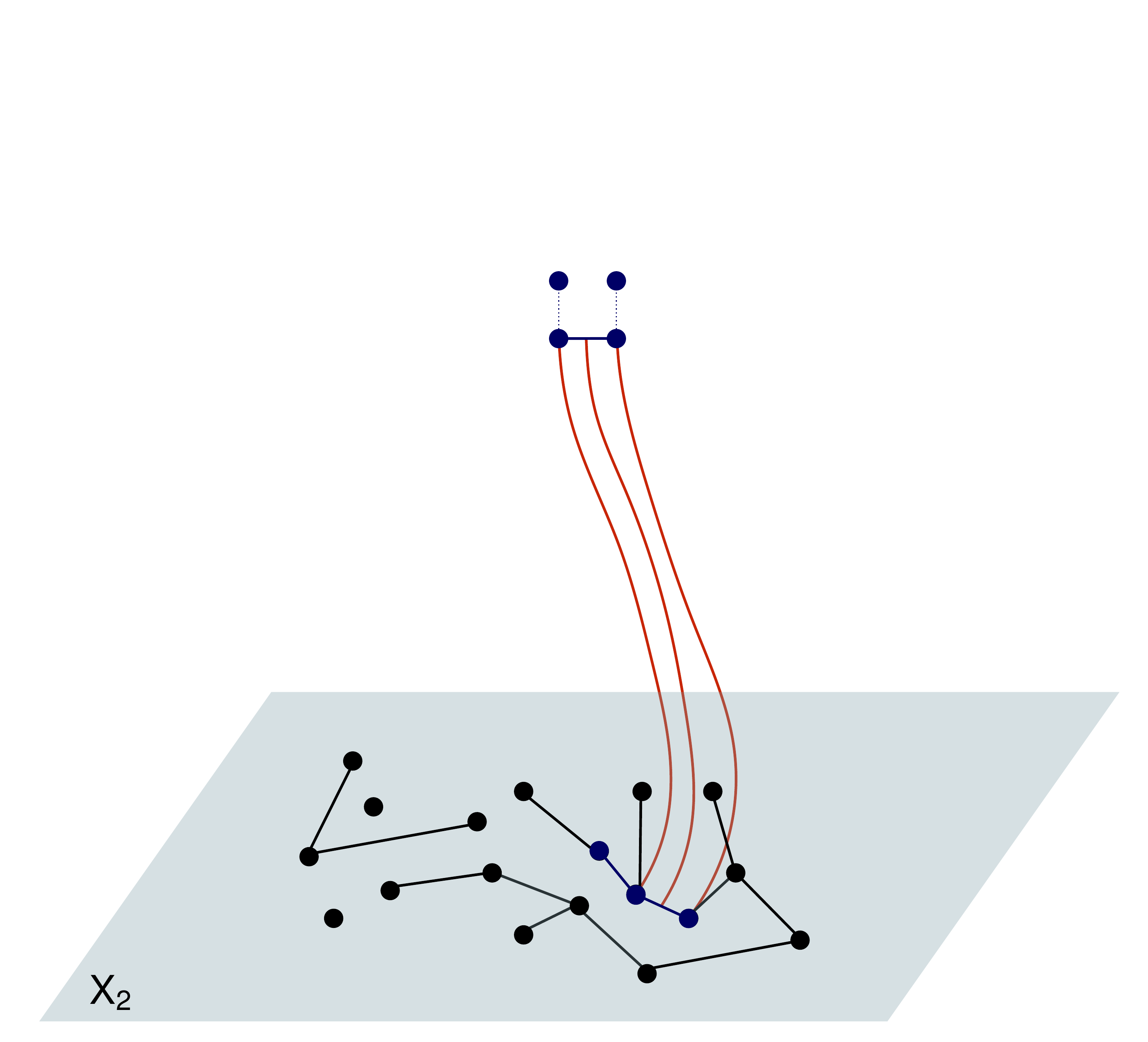}}}$
$\vcenter{\hbox{\includegraphics[scale=0.1]{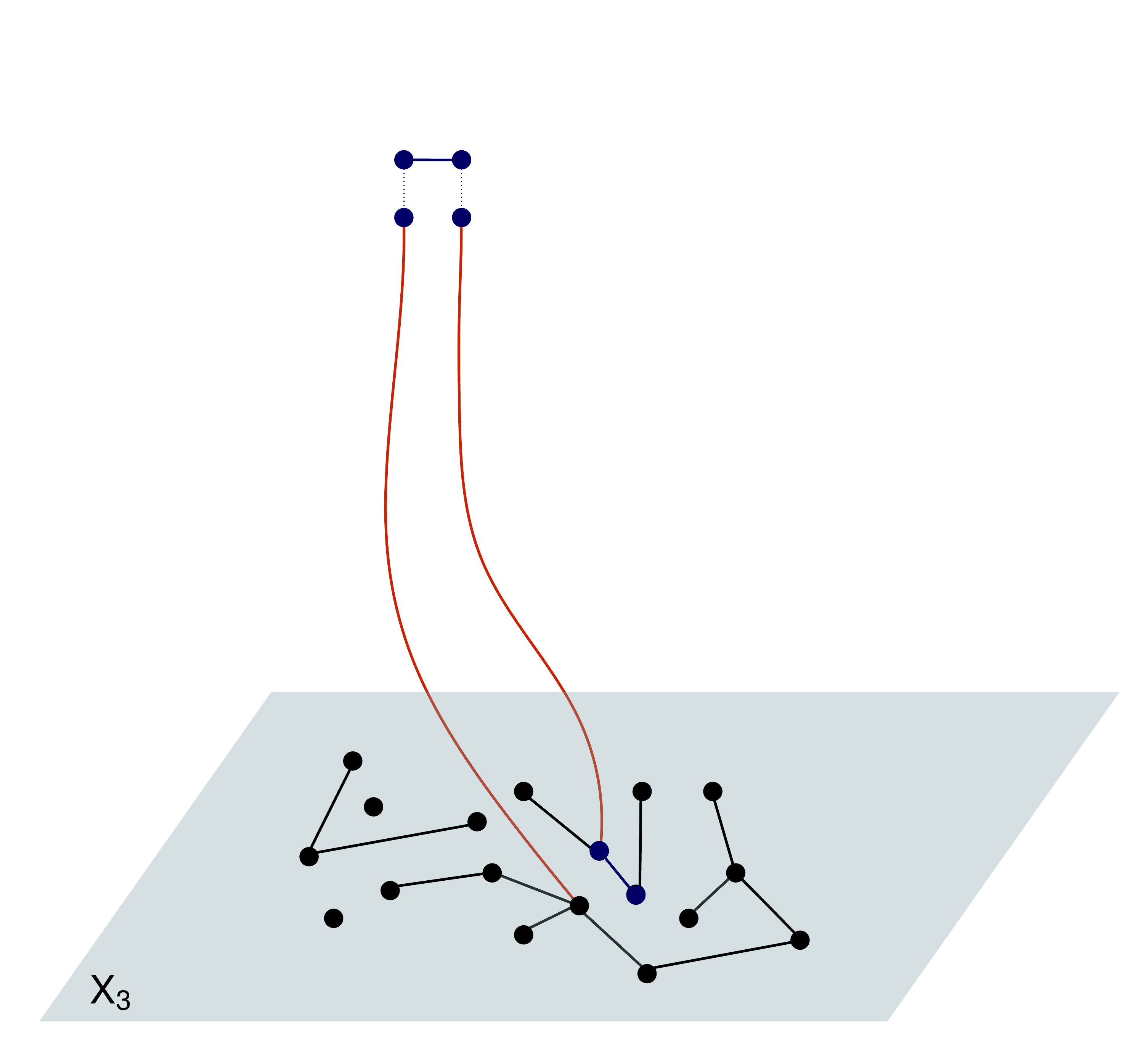}}}$
$\vcenter{\hbox{\includegraphics[scale=0.1]{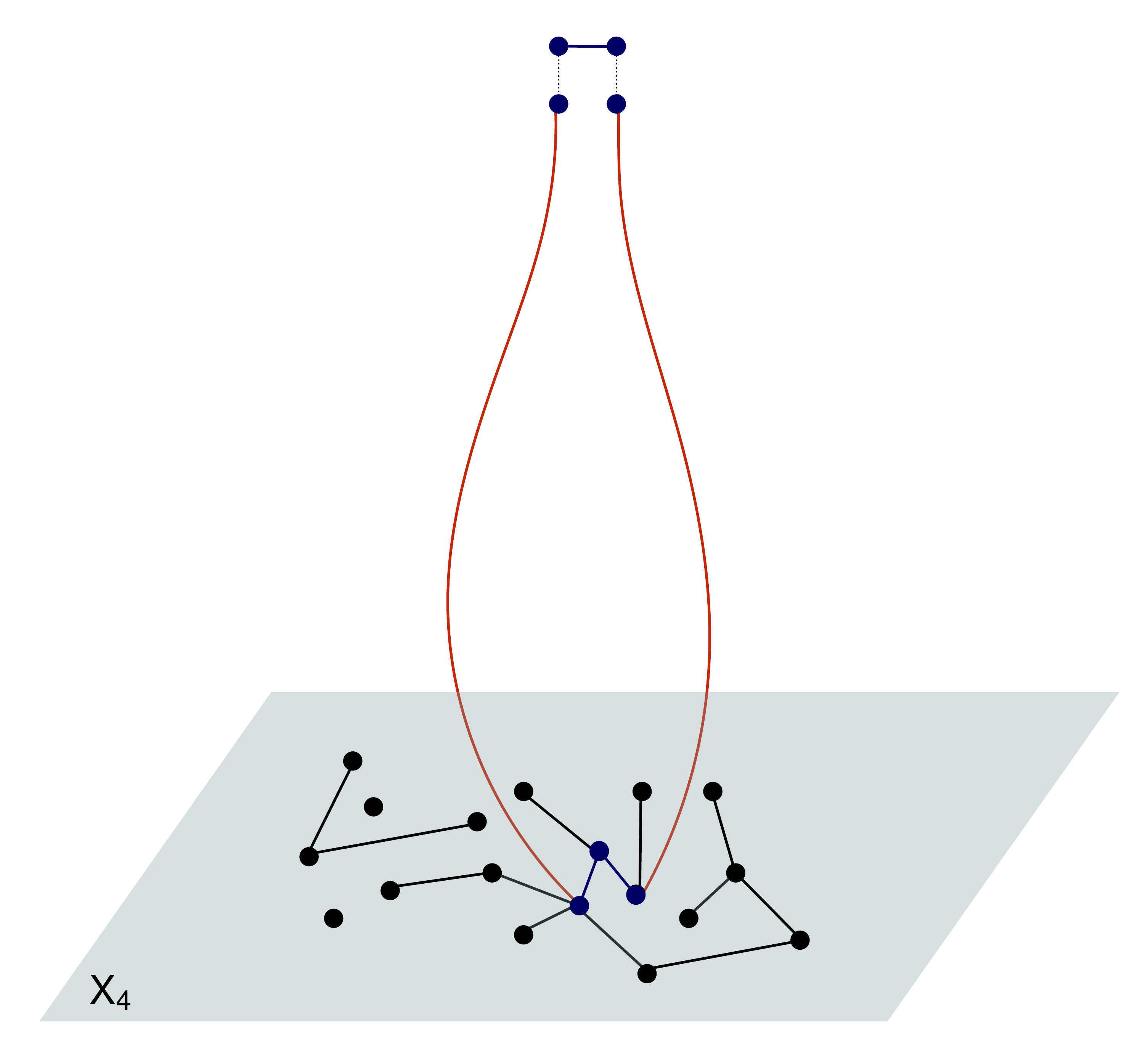}}}$}\\
\subfigure[Tracelet and shift equivalence example.\label{fig:TAillustration}]{
$\vcenter{\hbox{\includegraphics[height=0.32\textwidth]{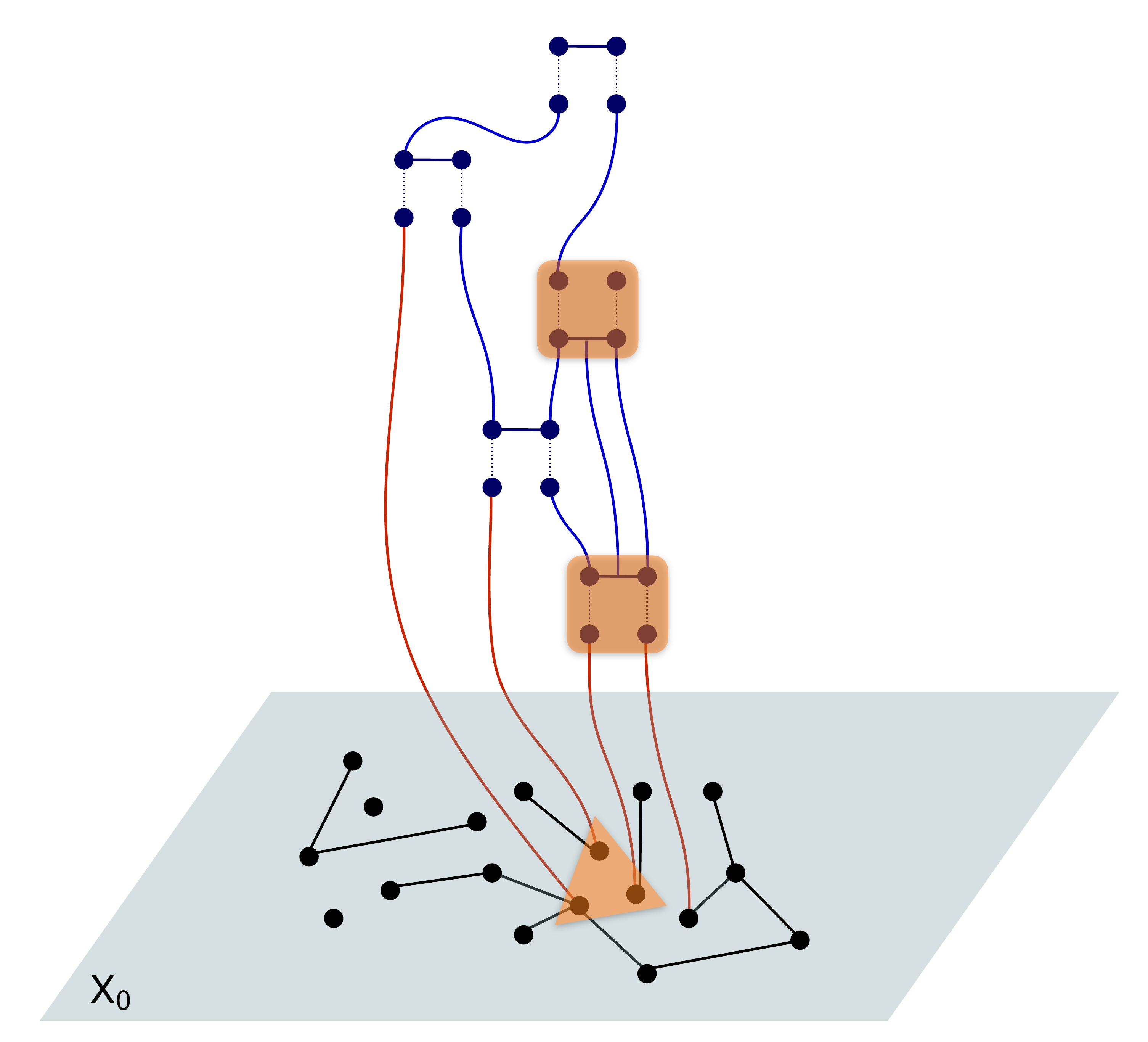}}}$}\;
\subfigure[Defining property of tracelets (here of length $3$).\label{fig:TA3definingProperty}]{$\vcenter{\hbox{\includegraphics[height=0.32\textwidth]{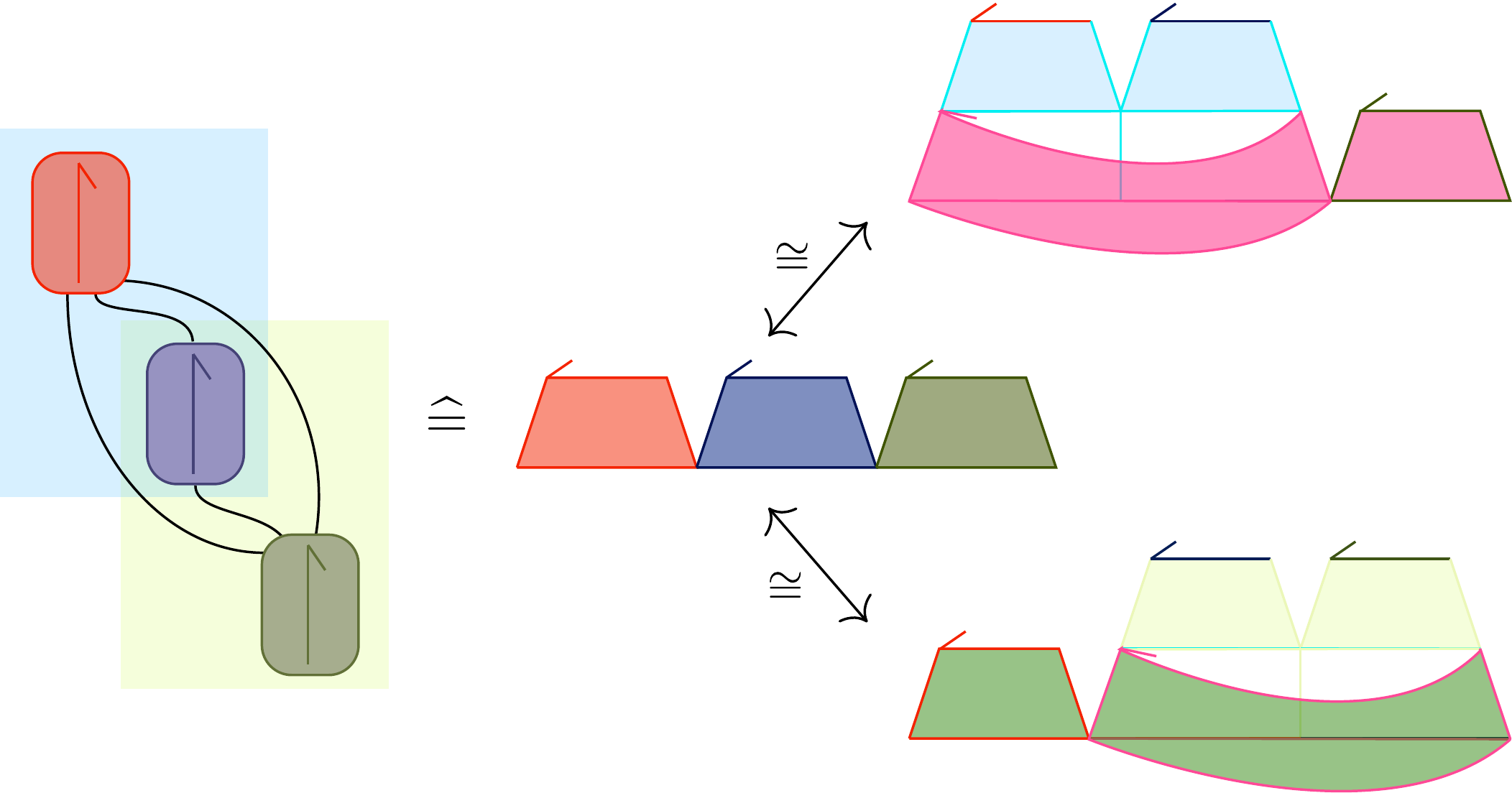}}}$}
\caption{An illustration of graph rewriting sequences (top) and of the tracelet picture (bottom).}
\label{fig:TLP}
\end{figure}

Consider a rewriting system of undirected multi-graphs such as the one depicted 
in Figure~\ref{fig:ddExample}, with some elementary rules that link and unlink vertices 
with edges. Starting from some initial graph $X_0$, each rewriting step (\emph{direct 
derivation}) consists in choosing a rewriting rule together with an occurrence 
(\emph{match}) of the input motif (drawn at the bottom of the rule diagrams) within
the graph that is rewritten. %
Sequences of rewrites (\emph{derivation traces}) have a rich
intrinsic structure, arising from the highly non-trivial \emph{interactions} of rules in a sequence through their matches, rendering a \emph{compositional} interpretation of derivation traces a highly non-trivial task. %
To wit, consider the diagram in Figure~\ref{fig:TAillustration}, which provides a
sort of movie-script depiction of the five-step derivation trace depicted in the top
part of the figure. For clarity, red wires are used to indicate inputs to rules that
are present in the original configuration $X_0$, while blue wires indicate inputs to
rules that have originated from outputs of preceding rules.

The main purpose of the theory of \emph{tracelets}~\cite{behr2019tracelets} then consists in rendering mathematically precise the meaning of this intuitive picture. In particular, according to the \emph{Tracelet Characterization Theorem}~\cite[Thm.~2]{behr2019tracelets}, each derivation trace of length $n$ is uniquely characterized by a \emph{tracelet} of length $n$ (cf.\ the sub-diagram consisting of the five rules and all blue wires in Figure~\ref{fig:TAillustration}) and a \emph{match} of the tracelet into the initial configuration $X_0$ (depicted as red wires in Figure~\ref{fig:TAillustration}). Crucially, the compositional structure of tracelets offers a form of static causal analysis via algebraic relations such as commutator relations. This type of analysis takes advantage of algebraic relations such as \emph{shift equivalence}, which in the example of Figure~\ref{fig:TAillustration} amounts to the observation that the rules in the boxes highlighted in orange may be freely moved ``along the wires'' so as to exchange their order (i.e., without changing the overall effect of the rewriting sequence). Finally, as sketched via the highlighted triangle pattern that is produced via the rewriting sequence in the example, tracelets permit to statically reason about the combinatorics of pattern-counting problems in an efficient manner  (cf.\ \cite{Behr2021} for a prototype of such an analysis in the setting of counting patterns in planar rooted binary trees).

As depicted in Figure~\ref{fig:TA3definingProperty}, a tracelet of 
length $3$ exhibits already quite a non-trivial compositional 
structure, in that as sketched the internal structure of partial 
overlaps of rule inputs and outputs in such a tracelet is 
not of a purely sequential nature; to wit, the diagram encodes a special kind of trace of length $3$, with the defining property that it may be equivalently (up to isomorphisms) be obtained via nested composition operations. It is in this particular sense that tracelets offer a \emph{minimal} causal presentation of the structure of rewriting sequences, since via the equivalences to nested pairwise composition operations, they permit to efficiently express $n$-step sequences as just a special type of derivation sequences.

\begin{blanko}{Categorical rewriting theory.} %
For simplicity, we will focus here on the variant of tracelet theory for so-called \emph{Double-Pushout (DPO) rewriting}, and for rewriting rules
without application conditions (referring to~\cite{behr2019tracelets} for
the general theory\footnote{Available generalizations include the type of
rewriting (with \emph{Sesqui-Pushout} semantics an alternative option), the
choice of base-categories (with \emph{$\cM$-adhesive
categories}~\cite{ehrig2014mathcal} a more general option), as well as the
inclusion of constraints and application conditions into the compositional
rewriting semantics (cf.\ also~\cite{Behr2021compositionality}).}).
Throughout this paper, let $\bfC$ denote an \emph{adhesive 
category}~\cite{ls2004adhesive} assumed to be 
\emph{finitary} (in the sense of~\cite{Braatz:2010aa}, i.e., with 
only finitely many subobjects for each object up to isomorphisms), 
and assumed to possess a strict initial object $\mIO\in \obj{\bfC}$. \emph{Rewriting rules} are 
defined as spans of monomorphisms $r = ( O \leftarrow o- K -i\rightarrow I )$, also denoted for brevity as $r=(O\leftharpoonup I)$. In the tradition of 
rewriting theory, we refer to such rules as \emph{linear rules} 
(with ``linear'' referring to the nature of the span as a span of monos), 
and denote the class of all such rules as $\Lin{\bfC}$. For every $r\in \Lin{\bfC}$ 
and object $X\in\obj{\bfC}$, let $\RMatchGT{}{r}{X}$ denote the 
\emph{set of (DPO-admissible) matches of $r$ into $X$}, where
$m\in \RMatchGT{}{r}{X}$ iff $m$ is a monomorphism and the 
\emph{pushout complement} marked $\mathsf{POC}$ in the left-most diagram exists:
\begin{equation}
\ti{DPOdd}
\end{equation}
Note that in an adhesive category, pushouts along monomorphisms (here marked 
$\mathsf{PO}$) are guaranteed to exist; in contrast, pushout complements (here 
marked $\mathsf{POC}$) may fail to exist, since not every composable pair of arrows 
can be completed into a pushout square. Moreover, $r_m$ and $Y=r_m(X)$ are evidently 
only defined up to universal isomorphisms. It is customary to refer to the data of 
the aforementioned diagram as a \emph{direct derivation}. For later convenience, 
taking advantage of the symmetry of the definition, we mark by $\mathsf{DPO}^{\dag}$ 
diagrams that arise as DPO-type direct derivations in the ``opposite direction'', 
i.e., ``against'' the direction of rules (here, from left to right; cf.\ the 
right-most diagram above).
\end{blanko}

\begin{blanko}{Tracelets.}\label{sec:maindefTracelets}
The class of \emph{tracelets} $\cT$ for DPO-type type rewriting over $\bfC$ is defined recursively:
\begin{itemize}
\item \emph{Tracelets of length $1$}: for every rule $r=(O\leftharpoonup I)\in \Lin{\bfC}$, define $T(r)\in \cT_1$ as a diagram
\begin{equation}
\ti{tlOne}\,.
\end{equation}
\item \emph{Tracelets of length $n+1$}: denoting by $\cT_n$ (for $n\geq1$) the class of trecelets of length $n$, the class $\cT_{n+1}$ is defined to consist of diagrams as below, where $r_k=(O_k\leftharpoonup I_k)\in \Lin{\bfC}$ are linear rules (for $k=1,\dotsc,n+1$), where the top right part of the diagram encodes a tracelet $T\in \cT_n$ of length $n$, where $\mu=(I_{n+1}\leftarrow M\rightarrow O_{n\cdots1})$ is a span of monos, with the cospan $I_{n+1}\rightarrow Y^{(n+1)}_{n+1,n}\leftarrow O_{n\cdots 1}$ its pushout, and such that the direct derivations marked $\mathsf{DPO}$ and $\mathsf{DPO}^{\dag}$ exist:
\begin{equation}
\ti{tlNpOneA}
\end{equation}
Then this data defines a tracelet $\TcompGT{T(r_{n+1})}{\mu}{T}{}$ of length $n+1$ (the \emph{tracelet composition} of $T(r_{n+1})$ with $T$ along $\mu$) uniquely up to universal isomorphisms (see comments below) as
\begin{equation}\label{eq:TnPone}
\TcompGT{T(r_{n+1})}{\mu}{T}{}:=\ti{tlNpOneB}\,.
\end{equation}
\end{itemize}
For later convenience, we will sometimes speak of the commutative 
square adjacent to rule $r_i$ in a tracelet as the \emph{$i$th plaquette}. 
Let $\cT:=\cup_{n\geq1}\cT_n$ denote the class of all (finite-length) tracelets. For later convenience, we also introduce the notations $\mathsf{in}(T):=I_{n\cdots1}$ (``input interface'' of $T\in \cT_n$), $\mathsf{out}(T):=O_{n\cdots1}$ (``output interface'' of $T\in \cT_n$), $\tMatchGT{T(r_{n+1})}{T}{}$ (for ``matches'', i.e., admissible partial overlaps $\mu$ of the length-$1$ tracelet $T(r_{n+1})$ with the tracelet $T\in \cT_n$) and $[[T]]$ for the so-called \emph{evaluation} of the tracelet $T$, which for $T\in \cT_n$ is defined with notations as in the top right of~\eqref{eq:TnPone} as
\begin{equation}
[[T]]:=(O_{n\cdots1}\leftharpoonup I_{n\cdots 1})=(O_{n\cdots1}\leftharpoonup Y^{(n)}_{n,n-1})\circ \cdots\circ (Y^{(n)}_{2,1}\leftharpoonup I_{n\cdots 1})\,.
\end{equation}
Here, $\circ$ denotes the operation of \emph{span composition} (considered up to 
span isomorphisms).
\end{blanko}

Up to this point, one might say that DPO-type tracelets are some form of data structure that encodes a certain form of \emph{sequential compositions} of rewriting rules. This point of view is augmented via the following definition, which finally reveals tracelets as a particular notion of compositional diagrams.

\begin{blanko}{Tracelet composition.} Let $T\in \cT_m$ and $T'\in \cT_n$ be two tracelets of length $m$ and $n$, respectively (for $m,n>0$). Let $\mu:=(I_{m\cdots1}\leftarrow M\rightarrow O'_{n\cdots 1})$ be a partial overlap (of the ``input interface'' $I_{m\cdots1}$ of $T$ with the ``output interface'' $O'_{n\cdots 1}$ of $T'$) whose pushout $I_{m\cdots1}\rightarrow Y^{(m+n)}_{n+1,n}\leftarrow O'_{n\cdots 1}$ satisfies that in the diagram below, all direct derivations marked $\mathsf{DPO}$ and $\mathsf{DPO}^{\dag}$, respectively, exist:
\begin{equation}
\ti{TmPnData}
\end{equation}
In this case, we write $\mu\in \tMatchGT{T}{T'}{}$ to say that $\mu$ is a 
(DPO-admissible) \emph{match} of $T$ into $T'$, and we define the \emph{composition} of $T$ with $T'$ along $\mu$ as
\begin{equation}
\TcompGT{T}{\mu}{T'}{}:=\ti{TcompMpN}\,.
\end{equation}
\end{blanko}

The definition of tracelets and their composition might appear somewhat ad
hoc at first sight, yet it is very natural if viewed in diagrammatic form.
To this end, consider the example of a tracelet of length $3$ such as in Figure~\ref{fig:TA3definingProperty}. The ``wires'' in the schematic
diagram that link individual length-$1$ tracelets encode the partial
overlaps; as indicated, the tracelet of length $3$ may be realized
recursively by either determining the partial overlap of the first and the
second sub-tracelet, composing, and then determining the resulting overlap
of the composite tracelet of length $2$ with the third tracelet of length
$1$, or (equivalently as it will turn out) by 
computing the composition of the
third and second tracelets of length $1$, and of that composite with the
first tracelet of length $1$. This so-called \emph{associativity property}
of tracelet composition is at the heart of the algebraic properties of
tracelets. It will be further illustrated when we now pass to discuss tracelets
in the framework of decomposition spaces.

\section{The decomposition space of rewrite rules}\label{sec:DSRR}

In this section we describe a decomposition
space $\X_\bullet$ of rewrite rules (for a fixed rewrite system in a fixed
adhesive category $\bfC$ as above), whose incidence algebra is the rule algebra.

\begin{blanko}{Decomposition spaces.}
  A decomposition space~\cite{Dyckerhoff-Kapranov12123563,Galvez-Kock-Tonks151207573} is a simplicial groupoid $X_\bullet :
  \simplexcategory\op\to\Grpd$ satisfying a certain exactness property
  designed precisely to allow the incidence coalgebra construction,
  classically defined for posets. The nerve of a poset or a category is an
  example of a decomposition space. Where categories encode composition,
  decomposition spaces owe their name to encoding more generally decomposition. Many
  situations where compositionality is hard to achieve can be dealt with
  instead with decompositions, as is often the case in combinatorics, where
  combinatorial structures can be split into smaller ones without the
  ability to compose~\cite{Galvez-Kock-Tonks170802570,Galvez-Kock-Tonks161209225}. Often non-deterministic composition
  structures can be turned around and constitute instead a decomposition.

  There are different ways to formulate the decomposition-space axioms. One
  (simplified) version states that for any endpoint-preserving monotone map
  $\alpha: [2] \actto [n]$, defining a decomposition of any $n$-simplex
  into an $n_1$-simplex and an $n_2$-simplex, the natural square
  \begin{equation}
  \ti{htpn1n2}
  \end{equation}
  is a (homotopy) pullback. It says that an $n$-simplex can be reconstructed from the 
  two smaller simplices of the decomposition together with the information of
  a gluing of the long edges of the two simplices onto the short edges 
  of a base $2$-simplex.

  This condition is considerably weaker than the Segal condition (which
  characterizes categories, hence composition rather than just
  decomposition), which says that a single-vertex overlap between the two
  smaller simplices is enough to perform the gluing. In the decomposition-space case, the
  base $2$-simplex is required as a kind of context for the gluing.
\end{blanko}

\begin{blanko}{Groupoids of tracelets.}
  An {\em isomorphism} between two tracelets of length $n$ is by definition
  a family of object-wise isomorphisms between the involved objects in
  $\bfC$ making all squares commute. %
  We denote by $\X_n$ the groupoid of all tracelets of length $n$.
  (In particular, 
  the only tracelet of length $0$ is the empty one (which 
  evaluates to the trivial rule), so $\X_0 = \{*\}$.)
\end{blanko}

\begin{theorem}
  The groupoids $\X_n$ assemble into a simplicial groupoid
  $\X_\bullet: \simplexcategory\op\to\Grpd$ (whose face and degeneracy maps we 
  proceed to describe below).
\end{theorem}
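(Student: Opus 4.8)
**

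The plan is to verify that the assignment $[n] \mapsto \X_n$ extends to a functor $\simplexcategory\op \to \Grpd$ by exhibiting the face and degeneracy maps explicitly and checking the simplicial identities. Since every monotone map in $\simplexcategory$ factors as a composite of cofaces $\delta^i$ and codegeneracies $\sigma^i$, it suffices to define the face maps $d_i \colon \X_n \to \X_{n-1}$ and the degeneracy maps $s_i \colon \X_n \to \X_{n+1}$ and then to establish the five families of cosimplicial identities among them.

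First I would describe the degeneracies, which are the easy part: $s_i$ inserts an identity (trivial) rule $T(\mathrm{id})$ at position $i$, glued in along the identity overlap, so that the $i$th plaquette becomes a degenerate square in which the relevant objects coincide. Because the empty tracelet of length $0$ evaluates to the trivial rule (the statement already records $\X_0 = \{*\}$), the degeneracies are well-defined diagrams of the required shape. Next I would describe the face maps. The outer faces $d_0$ and $d_n$ are the genuinely structural ones: $d_n$ forgets the last rule $r_n$ together with its plaquette and the overlap $\mu$ that attached it, retaining the length-$(n-1)$ sub-tracelet $T$ that sits in the top-right of the composition diagram~\eqref{eq:TnPone}; dually $d_0$ forgets the first rule. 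These are well-defined precisely because of the recursive definition of $\cT_n$ in~\ref{sec:maindefTracelets}, which presents every length-$n$ tracelet as a composite $\TcompGT{T(r_n)}{\mu}{T}{}$ and thereby exhibits the length-$(n-1)$ tracelet $T$ as a canonical sub-diagram. The inner faces $d_i$ (for $0 < i < n$) compose the two adjacent sub-tracelets across the $i$th wire, i.e.\ they contract plaquettes $i$ and $i{+}1$ into a single plaquette using tracelet composition along the induced overlap; this is where the associativity property of tracelet composition, highlighted at the end of Section~2, does the real work in guaranteeing that the result is again a well-formed tracelet.

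Having defined the maps on objects, I would check that each is functorial on isomorphisms of tracelets: since a tracelet isomorphism is by definition a family of object-wise isomorphisms making all squares commute, deleting, inserting, or composing plaquettes transports such families to such families, so each $d_i$ and $s_i$ is genuinely a functor of groupoids. The remaining task is to verify the cosimplicial identities $d_i d_j = d_{j-1} d_i$ for $i<j$, $s_i s_j = s_{j+1} s_i$ for $i \le j$, and the three mixed relations. For the degeneracy-degeneracy and the degeneracy-involving-distant-face relations these are immediate, because the operations act on disjoint plaquettes and the universal constructions (pushouts and pushout complements) are computed independently and commute up to canonical isomorphism; this is exactly the level of strictness at which one must work, since tracelets are only defined up to universal isomorphism, so the simplicial identities hold as natural isomorphisms of functors rather than on the nose.

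The main obstacle is the inner-face associativity identity $d_i d_{i+1} = d_i d_i$ (and more generally the compatibility of successive inner-face contractions), which amounts precisely to the associativity of tracelet composition: contracting plaquettes $i, i{+}1$ and then $i{+}1, i{+}2$ must agree, up to canonical isomorphism, with contracting in the other order. This reduces to showing that the two ways of nesting pairwise tracelet compositions in a length-$3$ configuration coincide, i.e.\ the defining property illustrated in Figure~\ref{fig:TA3definingProperty}. I expect this to follow from the universal property of the pushout-complement/pushout squares together with the pasting lemma for pushouts in the adhesive category $\bfC$: each contraction is assembled from pushouts and pushout complements along monomorphisms, and in an adhesive category these interact via the Van Kampen property, so the two nested composites are both characterized by the same universal property and are therefore canonically isomorphic. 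Assembling these canonical isomorphisms coherently gives the simplicial groupoid $\X_\bullet$, completing the proof.
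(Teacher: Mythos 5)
There is a genuine gap here, and it is precisely the issue the paper spends the bulk of its argument addressing. Your description of the structure maps matches the paper's (outer faces by forgetting the first/last rule, inner faces by contracting adjacent plaquettes into a composite plaquette, degeneracies by inserting the trivial rule), but what your argument delivers is only a \emph{pseudo}-simplicial groupoid: as you yourself concede, the simplicial identities hold ``as natural isomorphisms of functors rather than on the nose,'' because every face map involves choices of pushouts, pushout complements and pullbacks. The theorem, however, asserts an actual functor $\X_\bullet : \simplexcategory\op \to \Grpd$, and your closing sentence --- ``assembling these canonical isomorphisms coherently gives the simplicial groupoid'' --- is exactly the step you cannot take for free: passing from a pseudo-functor to a strict one requires either verifying coherence conditions or changing the model, neither of which you do. The paper's solution is the Quillen/Waldhausen/Segal technique of \emph{fully specified simplices}: an $n$-simplex is redefined to be a tracelet \emph{together with} all choices of composites (a $2$-simplex is a $2$-tracelet plus a chosen composite rule, a $3$-simplex is a fully subdivided tetrahedron as in Figure~\ref{fig:r3s}, and so on). In that beefed-up groupoid every face map merely returns data already present, so the simplicial identities are strict tautologically, and the uniqueness-up-to-universal-isomorphism of the composites guarantees the beefed-up groupoids are equivalent to the bare groupoids $\X_n$, so nothing is lost homotopically. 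This idea is entirely absent from your proposal.

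A second, smaller inaccuracy: you claim the crucial degree-$3$ identity follows from ``the universal property of the pushout-complement/pushout squares'' together with the pasting lemma. Pushout complements are \emph{not} characterized by a universal property --- this is exactly why they may fail to exist and why matches are non-deterministic --- and their good behaviour in adhesive categories is a theorem, not a formality. The statement you actually need, that the two nestings of pairwise compositions of three rules agree up to isomorphism, is the concurrency theorem in its associativity form~\cite{bp2019-ext,bdg2016,bdgh2016}, a nontrivial result of rewriting theory rather than a routine diagram chase. In the paper this same theorem is also what produces, from two adjacent fully specified $2$-simplices, the interior object and the remaining faces of a fully specified $3$-simplex, which is what makes the strict model close up.
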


Recall that a simplicial structure amounts to face maps $d_i$ and degeneracy maps 
$s_i$ as in the diagram
\begin{equation}
\ti{Xrds}
\end{equation}
subject to the {\em simplicial identities}:
  $d_is_i=d_{i+1}s_i=1$ and
  $$
  d_id_j=d_{j-1}d_i,\quad
  d_{j+1}s_i=s_id_j,\quad
  d_is_j=s_{j-1}d_i,\quad
  s_js_i=s_is_{j-1}\quad
  \qquad(i<j).
  $$
  The bottom and top face maps generate the class of {\em inert maps}, whereas the 
  inner face maps and the degeneracy maps generate the class of {\em active maps}, 
  for which we use the special arrow symbol $\actto$. These two classes of maps play 
  a special role in the theory; see~\cite{Galvez-Kock-Tonks151207573}, where more 
  conceptual characterizations are given.

\begin{blanko}{Description of the face maps.}
The \emph{top face map} $d_n:\X_n\to\X_{n-1}$ (resp.~the \emph{bottom face
  map} $d_0:\X_n\to\X_{n-1}$) is defined via (1) performing \emph{tracelet surgery} 
  to exhibit the tracelet as a composition of the first (the last) rule with an ($n-1$)-tracelet, followed by (2) extracting the length-($n-1$) tracelet. This is illustrated in Figure~\ref{fig:TA3definingProperty} for the case of $n=3$, with $d_0$ ($d_3$) defined to return the tracelet shaded in light blue (in light yellow). 

Recall from Section~\ref{sec:maindefTracelets} that a \emph{plaquette in position $i$} of a given tracelet is defined as the $i$th direct derivation, i.e., the commutative subdiagram of the tracelet involving the $i$th rule (read from the right). Then the \emph{inner face maps} $d_i : \X_n \to \X_{n-1}$ (for $0<i<n$) replace the two plaquettes $p_i$ and $p_{i+1}$ in the chain with a single new plaquette $p'$ having the same starting point as $p_i$ and the same endpoint
  as $p_{i+1}$, by applying the ``synthesis'' part of the concurrency theorem  to convert the sub-sequence of plaquettes $p_{i+1}$ after $p_i$ into a
  one-step direct derivation along the composite rule $p'_{i+1,i}$. The
  result of this operation is guaranteed to be a tracelet of length $n-1$. Referring 
  once again to Figure~\ref{fig:TA3definingProperty} for an illustration of the case $n=3$, $d_1$ ($d_2$) are defined to return the tracelet shaded in green (in pink).

  The \emph{degeneracy maps} $s_i : \X_n \to \X_{n+1}$ (for $0\leq i \leq n$)
  insert a copy of the trivial rule $\emptyset \leftarrow \emptyset \to 
  \emptyset$ in the tracelet at position $i$.
\end{blanko}

In the form stated, $\X_\bullet$ is only a pseudo-simplicial groupoid.
This means that the simplicial identities only hold up to (specified)
isomorphism, and that there are coherence issues to deal with. The reason
for this pseudo-ness is that composition of rules and tracelets, as
involved in the face maps, is only well defined up to isomorphism, relying
as it does on pushouts and pullbacks. To actually get well-defined face maps, it is
necessary to make choices of these universal constructions, and these
choices screw up the strict simplicial identities. (A well-known example 
of this phenomenon is
how composition of spans by means of pullbacks defines a bicategory, not an
ordinary category.)

This pseudo-ness is not at all a problem for the sake of decomposition-space 
theory, designed to be up to homotopy, and it does not affect the incidence
algebra we construct from this decomposition space (which in any case is 
spanned by iso-classes of rewrite rules). Nevertheless it is very fruitful
to provide also a strict model of $\X_\bullet$. The standard technique for 
constructing this (which goes back to insight from algebraic topology from 
the 1970s (notably Quillen,\footnote{Historical remark: B\'enabou (1963) 
had described a bicategory spans. Quillen used the techniques of big 
redundant $n$-simplices to exhibit the same structure as a {\em strict} 
simplicial groupoid, now called Quillen's Q-construction. Instead of 
having simply chains of $n$ composable spans 
\raisebox{-0.5mm}{\ti{quillenA}}
in degree $n$, he defined it 
to be diagrams of shape \raisebox{0.5mm}{%
\ti{quillenB}}, that is composable spans, {\em together} with all 
the relevant pullbacks. Similar constructions were given in related 
situation by Waldhausen and Segal, and today the technique is standard in 
algebraic topology.} Waldhausen, and Segal)) is to beef up the 
groupoid of $n$-simplices to something equivalent that contains all the
(redundant) data involved in the face maps.

Specifically, a $2$-simplex should not just be a $2$-tracelet, but rather 
a $2$-tracelet {\em together} with a choice of composite rule. In this way 
the middle face map $d_1 : \X_2 \to \X_1$ does not have to compute any composite by means of 
choices; it can simply return the choice already built in. The fact that
these choices are unique up to universal isomorphisms says precisely that
this bigger groupoid is equivalent to the original, and hence that the
homotopy properties of the bigger simplicial groupoids are the same.
In Figure~\ref{fig:r2s} we see such a fully specified $2$-simplex.
The two short edges ($01$ and $12$) are the two rules in a $2$-tracelet,
and the squares marked PO and POC are the plaquettes constituting 
altogether the $2$-tracelet. The pullback square (blue, marked with PB) is 
not
part of the data of the tracelet, but it is included in the fully 
specified notion of $2$-simplex.

In degree $3$ we arrive at the first point where there is an interesting
simplicial identity to establish, namely commutativity of the square
\begin{equation}
\ti{SPBI}
\end{equation}
which in essence states that a
sequential composition of three rules may be recovered equivalently from
two steps of pairwise rule compositions in either of the nesting orders. 

For the groupoids of bare tracelets, this simplicial identity cannot be
strict, due to the choices of pushouts and pullbacks involved in
composition of rules and tracelets. That the equation holds up to natural
isomorphism is a nontrivial statement which involves the concurrency
theorem (in the particular form called associativity
theorem~\cite{bp2019-ext,bdg2016,bdgh2016}). We explain how the same
theorem implies the strict equation for the fully specified $3$-simplices.
This exhibits the beautiful geometry inherent in the associativity theorem.
As always, the idea is that a fully specified $3$-simplex should contain
all information about all choices. In particular (in order for the four
face maps to be forgetful) it should contain four $2$-simplices of the form
of Figure~\ref{fig:r2s}. A full picture of such a subdivided tetrahedron
is given in Figure~\ref{fig:r3s}.
One can chase through how this is built up from composition of tracelets, 
over specified overlaps:
Consider the diagram
depicted in Figure~\ref{fig:rds3-prepA}, which is formed by (1) a
$2$-simplex encoding a composition of two rules $r_{21}$ and $r_{10}$ into
some rule $r_{20}$, and (2) another $2$-simplex of which one ``short edge''
is the rule $r_{20}$, and which contains another rule $r_{32}$ and the data
of the composition of $r_{32}$ with $r_{20}$ into some rule $r_{30}$. Upon
closer inspection, it is possible (via a number of somewhat intricate
steps) to construct from this data the interior and the other two faces of
a tetrahedron. To this end, one first invokes the ``analysis'' part of the
DPO-type concurrency theorem in order to obtain, from the sub-diagram that
encodes the one-step direct derivation of the object $I_{03}$ along the
composite rule $O_{20}\leftarrow K_{20}\rightarrow I_{20}$, the data of a
sequence of two direct derivations along the ``constituent'' rules
$O_{21}\leftarrow K_{21}\rightarrow I_{21}$ after $O_{10}\leftarrow
K_{10}\rightarrow I_{10}$. This construction in particular delivers an
object $Z$ located in the interior of the tetrahedron. Over several
further steps (involving pushout and pullback operations), it is then
possible to fill the remaining two faces of the $3$-simplex with the
structure of two sequential rule compositions, ultimately resulting in the
diagram of Figure~\ref{fig:r3s}. The fact that all these constructions are
given by universal properties (pushouts and pullbacks, together with
the axioms of adhesive categories) ensures that the groupoid of such
fully specified $3$-simplices is equivalent to the groupoid of bare
$3$-tracelets.
The face maps are now obvious (or even tautological) and all the 
simplicial identities are clearly strict for this reason: they merely 
return data already contained in (the beefed-up version of) $\X_3$.

The higher simplices are
increasingly cumbersome to describe, due to our limited 
vision of geometry in dimension higher than $3$, but the principle is easy to
follow: just include all information about all possible composites,
and the overall geometric shape is always a geometric $n$-simplex whose
edges are rules, whose $2$-dimensional faces are as in Figure~\ref{fig:r2s}
and whose $3$-dimensional faces are as in Figure~\ref{fig:r3s}.

The fact that in each dimension the bare tracelets contain information 
necessary and sufficient to reconstruct the full specified simplex is an expression of the central result
of~\cite{behr2019tracelets} that it is indeed tracelets that provide the
minimal carriers of causal information in sequential rule compositions.

We proceed to establish that $\X_\bullet$ is a decomposition space. Since this is 
a homotopy invariant property, we may work with the simple version of 
groupoids of $n$-tracelets. Before the check, let us just note that 
$\X_\bullet$ is 
not a Segal space (a category), because of the non-deterministic nature of 
composition. Specifically, a $2$-simplex
cannot be reconstructed from knowing its two short edges.

\begin{theorem}
  $\X_\bullet$ is a decomposition space. This means that 
  for all $0<i<n$ the
  two squares
\begin{equation}
\ti{htpA}
	\qquad\qquad
\ti{htpB}
\end{equation}
are (homotopy) pullbacks.
\end{theorem}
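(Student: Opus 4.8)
The plan is to verify the two decomposition-space squares directly, working throughout with the groupoids $\X_n$ of \emph{bare} tracelets --- legitimate because the property is invariant under levelwise equivalence, as noted above. The squares are the standard decomposition-space squares pairing an \emph{inner} (active) face map $d_i$, $0<i<n$, with an \emph{outer} (inert) face map: the bottom face $d_0$ in one square (expressing the identity $d_0 d_i = d_{i-1}d_0$) and the top face in the other (expressing $d_n d_i = d_i d_{n+1}$). For each, the task is to show that the canonical comparison functor from $\X_{n+1}$ into the homotopy fibre product of the remaining cospan is an equivalence of groupoids; concretely, for the bottom square, given a tracelet $S\in\X_n$ destined to be the $d_0$-image, a tracelet $T\in\X_n$ destined to be the $d_i$-image, and an isomorphism identifying $d_{i-1}S$ with $d_0 T$ in $\X_{n-1}$, there should be an $(n+1)$-tracelet restricting to this data, unique up to a contractible space of isomorphisms.

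I would produce the required equivalence by constructing an explicit quasi-inverse via \emph{tracelet surgery} and the concurrency theorem. Recall that the outer face maps merely re-bracket off the boundary rule and discard it, whereas the inner face map $d_i$ replaces the adjacent plaquettes $p_i,p_{i+1}$ by their synthesized composite $p'_{i+1,i}$. Since the boundary deletion and the inner composition are supported on regions of the plaquette chain that meet only in the boundary case, away from that case the two faces jointly retain every plaquette and the $(n+1)$-tracelet is reassembled by simply grafting the uncomposed pair kept by the outer face into the full chain kept by the inner face. The content of the axiom is thus concentrated in the \emph{analysis} direction of the DPO concurrency theorem: from the composite plaquette $p'_{i+1,i}$ retained after the inner contraction, together with one constituent retained by the outer face, the complementary constituent is recovered as a pushout complement, and the whole chain is thereby reconstructed.

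The reason this reconstruction delivers a genuine \emph{homotopy} pullback of groupoids, rather than merely a bijection on isomorphism classes, is the adhesivity of $\bfC$. Plaquettes are built from pushouts along monomorphisms and from pushout complements, and in an adhesive category the relevant squares are van~Kampen, so the objects and connecting monomorphisms produced by the surgery are determined up to \emph{unique} isomorphism by the face data. This exactness is exactly what makes the comparison functor both essentially surjective (the reconstruction exists) and fully faithful (an isomorphism of a reconstructed tracelet is the same datum as a compatible pair of isomorphisms of its two faces over the shared base). Coherence of the reconstruction --- that the two nesting orders of pairwise composition agree up to a coherent system of isomorphisms --- is furnished by the \emph{associativity theorem} already invoked for the degree-$3$ simplicial identity; granting it, the comparison functor is an equivalence and both squares are homotopy pullbacks.

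The step I expect to be the main obstacle is the full-faithfulness half, i.e.\ the groupoid-level rather than set-level pullback property: one must carry the universal isomorphisms emitted by each pushout and pushout complement through the surgery and confirm that automorphisms of the reassembled $(n+1)$-tracelet match compatible pairs of automorphisms of its faces, with the van~Kampen exactness ruling out spurious ones. The delicate sub-case is the boundary one, $i=1$ (respectively $i=n$), where the composed pair abuts the deleted boundary rule so that inner composition and outer deletion genuinely interact; this should be isolated and checked on its own, again using the associativity theorem to reconcile the orders of composition.
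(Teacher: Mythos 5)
Your plan is workable and runs on the same mathematical engine as the paper --- the DPO concurrency theorem (analysis for existence of the reconstruction, uniqueness up to unique isomorphism from adhesivity for the groupoid-level statement) and the associativity theorem for coherence --- but the proof architecture is genuinely different. You check the pullback property by building a quasi-inverse to the comparison functor into the homotopy fibre product, which forces a case analysis (generic position versus the boundary cases $i=1$, $i=n$) plus a separate full-faithfulness argument. The paper instead uses the equivalent criterion that a square of groupoids is a homotopy pullback iff the induced maps on homotopy fibres of the vertical maps are equivalences, and then proves a single uniform statement (Lemma~\ref{lem:hf}): the homotopy fibre of \emph{any} inner face map $d_i:\X_n\to\X_{n-1}$ over a tracelet having rule $r'$ in position $i$ is canonically equivalent to $(\X_2)_{r'}$, the groupoid of composable pairs with minimal gluing composing to $r'$ --- independently of the plaquette under $r'$ and of the surrounding context. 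Since the inert (horizontal) maps in both squares preserve the rule in position $i$, this context-independence makes all fibre comparisons equivalences at once, covering both squares, all $n$ and all $i$, including the boundary cases you flag as delicate. What your route buys is an explicit reconstruction procedure; what the paper's route buys is uniformity (no case distinctions) together with an explicit description of the fibre $(\X_2)_{r'}$, which is then reused downstream: it is precisely the groupoid governing the structure constants of the incidence algebra in the identification with the rule algebra.

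One caution on your sketch: the claim that away from the boundary case ``the two faces jointly retain every plaquette'' and reconstruction is ``simply grafting'' understates the situation. In this paper the outer face maps are \emph{not} mere deletion of a plaquette; they perform tracelet surgery, re-deriving the interface structure of the remaining length-$n$ tracelet, so even in the generic case the reconstruction must pass through the concurrency theorem rather than through a combinatorial regluing of retained plaquettes. This does not break your argument --- it is exactly where you already invoke concurrency --- but it means your generic/boundary dichotomy is less sharp than stated, which is again a respect in which the context-independence of the fibre is the cleaner organizing statement.
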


To check this, it is enough to show that the fibers of the maps pictured
vertically are equivalent. We shall see that indeed all fibers of inner
face maps are canonically identified with the fiber of $d_1: \X_2 \to
\X_1$.

\begin{blanko}{Fiber calculations.}
Consider $d_1 : \X_2 \to \X_1$ which sends a pair of composable rules with
minimal gluing $(r_2,w,r_1)$ to the composite rule $r'$. The fiber over $r'\in
\X_1$ is thus the groupoid of all $(r_2,w,r_1)$ that compose to $r'$. We
denote this groupoid $(\X_2)_{r'}$. Notice that the objects of $\X_2$ are composable 
pairs of plaquettes with the property that the intermediate point between the two 
plaquettes is a minimal gluing (of the output of rule $r_1$ with the input of rule 
$r_2$; in other words, the middle cospan in the two-step direct derivation sequence is a pushout of its own pullback). 

\end{blanko}

\begin{lemma}\label{lem:hf}
The (homotopy) fiber of $d_i : \X_n 
\to \X_{n-1}$ (for $0<i<n$) over a tracelet which in position $i$ has a plaquette with rule $r'$ is equivalent to the groupoid
$(\X_2)_{r'}$. In particular, it does not depend on the whole plaquette $p'$
under $r'$, and it does not depend on the context in any way.
\end{lemma}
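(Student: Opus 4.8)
The plan is to produce an explicit equivalence of groupoids between the homotopy fiber of $d_i$ over a target $(n-1)$-tracelet $T'$ and the groupoid $(\X_2)_{r'}$ of decompositions of the rule $r'$ occupying position $i$. The organising principle is that the inner face map $d_i$ disturbs only the plaquette at position $i$, transporting every other plaquette of the chain along unchanged; computing its fiber is therefore a purely \emph{local} problem, namely that of classifying the ways in which the single composite plaquette $p'$ (with rule $r'$) can be refined into two consecutive plaquettes.

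First I would define a forgetful functor $F$ from the fiber to $(\X_2)_{r'}$. An object of the homotopy fiber is a length-$n$ tracelet $T$ equipped with an isomorphism $d_i(T)\cong T'$. From the two plaquettes of $T$ in positions $i$ and $i+1$ one reads off the bare $2$-tracelet datum $(r_{i+1},w,r_i)$ --- the two rules together with their minimal gluing $w$ --- and the isomorphism $d_i(T)\cong T'$ identifies its composite with $r'$, so that this datum is an object of $(\X_2)_{r'}$. Isomorphisms of tracelets restrict to isomorphisms of this local datum, so $F$ is well defined; the key point, supplied by the concurrency theorem, is that the gluing $w$ extracted this way is intrinsic to the decomposition and does not record the match of $p'$ into its context.

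The substantive direction is the weak inverse $G$. Given a decomposition $(r_2,w,r_1)\in(\X_2)_{r'}$, I apply the \emph{analysis} part of the DPO concurrency theorem to the plaquette $p'$ of $T'$, regarded as a one-step direct derivation along $r'$ in its ambient context. This produces an intermediate object and a two-step direct derivation realising the chosen decomposition and sharing the corners of $p'$; splicing these two new plaquettes in place of $p'$, while keeping all remaining plaquettes of $T'$ fixed, reassembles a diagram $G(r_2,w,r_1)$. One checks that this diagram is a genuine length-$n$ tracelet --- that all the required $\mathsf{DPO}$ and $\mathsf{DPO}^{\dag}$ squares exist --- which is exactly what the concurrency theorem delivers, and that $d_i\big(G(r_2,w,r_1)\big)\cong T'$. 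That $F$ and $G$ are mutually weakly inverse then follows from the two halves of the theorem: $F\circ G\cong\mathrm{id}$ because the analysis construction realises precisely the chosen decomposition, and $G\circ F\cong\mathrm{id}$ because analysis and synthesis are mutually inverse up to the universal isomorphisms, so re-decomposing the composite $p'$ recovers the original pair of plaquettes up to canonical isomorphism. Since $G$ refers only to the rule $r'$ and the corners of $p'$, and since $(\X_2)_{r'}$ sees nothing but $r'$, the equivalence manifestly ignores both the remainder of the plaquette $p'$ and the surrounding context, which yields the two ``in particular'' assertions.

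The main obstacle I expect is upgrading the concurrency (``associativity'') theorem from a bijection of iso-classes of decompositions to a genuine \emph{equivalence of groupoids}, pseudo-naturally in the ambient context and uniformly in $i$. The bijection on objects is classical DPO theory; the delicate part is verifying that the universal isomorphisms produced by the analysis construction assemble coherently, so that $G$ is a bona fide functor and the natural isomorphisms witnessing $F\circ G\cong\mathrm{id}$ and $G\circ F\cong\mathrm{id}$ exist and are natural. This is precisely where finiteness and the adhesivity axioms --- through uniqueness up to universal isomorphism of the relevant pushouts and pullbacks --- do the work.
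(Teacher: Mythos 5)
Your proposal is correct and takes essentially the same route as the paper: the paper's (sketched) argument likewise identifies the fiber of every inner face map $d_i$ with the fiber $(\X_2)_{r'}$ of $d_1:\X_2\to\X_1$, by locally splitting/recombining the plaquette at position $i$ via the analysis and synthesis halves of the DPO concurrency (associativity) theorem, with the groupoid-level coherence coming from the universal properties of pushouts and pullbacks in an adhesive category. Your functors $F$ and $G$ and the locality observation are precisely the content of the paper's remark that all inner-face fibers are canonically identified with that of $d_1:\X_2\to\X_1$.
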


One can now unpack the general construction of incidence algebras of 
decomposition spaces (cf.~\cite{Galvez-Kock-Tonks151207573}) to establish:
\begin{proposition}
  The incidence algebra is the rule algebra of \cite{bp2019-ext}.
\end{proposition}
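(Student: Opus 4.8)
The plan is to unpack the general incidence-algebra construction of~\cite{Galvez-Kock-Tonks151207573} for the specific decomposition space $\X_\bullet$ built above, and to match it term by term against the rule algebra of~\cite{bp2019-ext}. Recall that for any decomposition space the incidence algebra has as underlying vector space the $\Q$-span of isomorphism classes of $1$-simplices, $\Q\pi_0(\X_1)$, and its product is the convolution product dual to the incidence comultiplication; concretely, by the \emph{pull--push} formalism (taken in homotopy cardinality), the multiplication is computed along the span $\X_1\times\X_1 \xleftarrow{(d_2,d_0)} \X_2 \xrightarrow{d_1} \X_1$ as $m = (d_1)_!\circ(d_2,d_0)^*$, and the unit is the image of the point under $s_0:\X_0\to\X_1$. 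The first step is therefore to read off these three structure maps explicitly for $\X_\bullet$.

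On the rewriting side the identification is essentially tautological once the maps are named. Since $\X_1$ is by construction the groupoid of linear rules, $\Q\pi_0(\X_1)$ is precisely the free $\Q$-vector space on iso-classes of rules in $\Lin{\bfC}$, i.e.\ the underlying space of the rule algebra. Next I would use the description of $\X_2$ as composable plaquettes with minimal gluing: the map $(d_2,d_0):\X_2\to\X_1\times\X_1$ returns the two short edges, i.e.\ the constituent rules $(r_2,r_1)$, and by Lemma~\ref{lem:hf} (the case $n=2$) its homotopy fiber over a fixed pair is exactly the groupoid of admissible matches $\mu\in\tMatchGT{r_2}{r_1}{}$, while $d_1$ returns the composite rule. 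Hence the pull--push product of basis elements $[r_2]$ and $[r_1]$ is the sum over all such matches of the composite, $\sum_{\mu}\bigl(\rap{\mu}{r_2}{r_1}\bigr)$, which is verbatim the defining formula for the rule algebra product of~\cite{bp2019-ext}. The unit check is immediate, as $s_0$ inserts the trivial rule $\emptyset\leftharpoonup\emptyset$, which is the unit of the rule algebra.

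It then only remains to verify that the numerical \emph{coefficients} agree, and this is the step I expect to be the main obstacle. The pull--push is taken in homotopy cardinality, so each composite appears weighted by groupoid cardinality, carrying factors of $1/|\mathrm{Aut}|$, whereas the rule algebra of~\cite{bp2019-ext} is presented with its own bookkeeping of multiplicities of admissible matches. I would reconcile the two by checking that the automorphism groups occurring in the homotopy fibers of $(d_2,d_0)$ coincide with the symmetry factors attached to matches in~\cite{bp2019-ext}, so that the structure constants agree after the normalization used there. Finally, associativity requires no separate computation: it is inherited for free from the decomposition-space axioms already established for $\X_\bullet$, and the strictness discussion surrounding the subdivided tetrahedron (Figure~\ref{fig:r3s}) shows precisely that the general associativity of the incidence algebra specializes to the associativity theorem of rewriting.
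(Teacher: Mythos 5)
Your proposal is correct and is essentially the paper's own (largely unstated) proof: the paper merely asserts that one ``unpacks the general construction of incidence algebras of decomposition spaces,'' and your execution --- underlying space $\Q\pi_0(\X_1)$ spanned by iso-classes of rules, product as the pull--push $(d_1)_!\circ(d_2,d_0)^*$ whose fibers are the groupoids of admissible matches so that $[r_2]\cdot[r_1]=\sum_{\mu}[\,\text{composite}\,]$, unit from $s_0$, associativity for free from the decomposition-space axioms, and a final reconciliation of homotopy-cardinality weights with the match-counting of \cite{bp2019-ext} (which goes through because an automorphism of a $2$-simplex fixing both boundary rules pointwise must fix the overlap and hence, by the uniqueness clauses of the universal constructions in an adhesive category, is the identity, so all fiber automorphism groups are trivial and the weights are all $1$) --- is precisely that unpacking. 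One small correction: the identification of the homotopy fiber of $(d_2,d_0)$ over a pair of rules with the groupoid of matches follows from the stated description of $\X_2$ (composable plaquettes with minimal gluing), not from Lemma~\ref{lem:hf}, which concerns the fibers of the \emph{inner} face maps $d_i$ over a fixed composite; since you also cite that description directly, nothing in your argument is affected.
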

This algebra is not our main focus in this work. Rather do we regard the 
decomposition space $\X_\bullet$ as a stepping stone towards more interesting 
decomposition spaces and Hopf algebras, notably the tracelet Hopf algebra.

\section{Decomposition spaces of tracelets}\label{sec:DST}

So far we have defined the decomposition space $\X_\bullet$ of rules, whose
incidence algebra is the rule algebra of \cite{bp2019-ext}. We
now proceed towards Hopf algebras spanned by tracelets.

The Hopf algebra of tracelets should be spanned by iso-classes of 
tracelets, which are now furthermore required to be {\em non-degenerate} as simplices of $\X_\bullet$. This means that the rules
involved are not allowed to be the trivial rule.\footnote{By imposing this condition,
we account directly for an equivalence relation imposed in \cite{behr2019tracelets}
called `equivalence up to trivial tracelets' (cf.\ Definition~\ref{def:nfe}).} The
non-degenerate simplices of $\X_\bullet$ do not form a simplicial object, since inner
faces of non-degenerate simplices are not always non-degenerate, but the outer face
maps (the inert maps) survive (as a consequence of the 
decomposition-space axioms, see \cite{Galvez-Kock-Tonks151207577}), so as to define a
presheaf
$$
\nondeg \X_\bullet: \simplexcategory_{\text{inert}}\op\to\Grpd .
$$
Left Kan extension along 
the inclusion functor $j: \simplexcategory_{\text{inert}} \to \simplexcategory$
defines a new simplicial groupoid:
$$
\Y_\bullet := j\lowershriek \nondeg \X_\bullet \ : \ \simplexcategory\op\to\Grpd . 
$$
This is a general construction that makes sense for any (complete)
decomposition space, and by a result of Hackney and
Kock~\cite{Hackney-Kock} it always produces a decomposition space again.
One can expand explicitly what its simplices are:
$$
\Y_k = \sum_{\alpha:[k] \actto [n]} \nondeg \X_n .
$$
(The sum is over active maps.)
In particular
$$
\Y_0 = \X_0 \qquad \text{and} \qquad \Y_1 = \sum_{n\in \N} \nondeg \X_n .
$$
So the new $1$-simplices are the non-degenerate tracelets of any length.
The higher simplices are `subdivided tracelets'. To see this, recall that 
the decomposition space axioms can be written  (cf.~\cite[Prop.~6.9]{Galvez-Kock-Tonks151207573}) as saying that for any 
active map $\alpha:[k] \actto [n]$ the canonical square
\begin{equation}
\ti{nkCanonical}
\end{equation}
is a (homotopy) pullback.
Here the vertical maps are active and the horizontal maps are combinations 
of inert maps.  What the condition says is that it is possible to glue 
together $k$ simplices (of different dimensions $n_i$) if just one has 
available a `mould' to glue them together in, namely a $k$-simplex whose 
$k$ principal edges match the long edges of the $k$ simplices.
(This is also the essence of the very definition of tracelet.) 

An example of such a composition is depicted in Figure~\ref{fig:rds3-prepA}, in 
which a length-$2$ tracelet (depicted as the $2$-simplex 012) is composed along the short edge 02 of the $2$-simplex 023 with a tracelet of length $1$ (here depicted as the edge 23). Figure~\ref{fig:rds3-prepB} then depicts the method for computing the resulting tracelet of length $3$, which itself is depicted in Figure~\ref{fig:rTot}.

Since non-degeneracy in a decomposition space can be measured on principal 
edges (cf.\ \cite{Galvez-Kock-Tonks151207577}), we also have the 
(homotopy) pullback
\begin{equation}
\ti{nkCanonicalND}
\end{equation}
We see that a $k$-simplex in $\Y_\bullet$ is the data of a tracelet $\tau$ of length $k$
(not necessarily non-degenerate) together with a non-degenerate tracelet 
$\sigma_i$
glued onto each of the principal edges of this base tracelet along their
evaluation. (That is, the rule given by evaluating the tracelet $\sigma_i$ must match
the rule corresponding to the $i$th principal edge of $\tau$.)

The corresponding algebra, given by the standard incidence algebra
construction (cf.~\cite{Galvez-Kock-Tonks151207573}), is spanned by
isomorphism classes of non-degenerate tracelets, and the product of
two tracelets is given by summing over all possible tracelet
composites.

\bigskip

We now proceed to extend this structure into a Hopf algebra.
This is not straightforward, because the decomposition space 
$\Y_\bullet$ is not monoidal under sum.  A monoidal structure exists in degree $1$, by 
declaring the product of two tracelets to be the composite along trivial 
overlap:
$$
T \odot T' :=  
\TcompGT{T}{\emptyset}{T'}{} .
$$
But this definition is not compatible with higher simplices.

Our task is now to explain how this is fixed in a canonical 
way.
The solution amounts to imposing the so-called {\em shift equivalence} 
relation on tracelets, an equivalence relation already important in rewriting 
theory. In the graphical interpretation it is about saying that for tracelets that are not 
connected, it should make no difference in which order they are applied.
After passing to this equivalence relation, the monoidal structure $\odot$ 
will be well defined in all simplicial degrees. This final symmetric 
monoidal decomposition space of tracelets up to shift equivalence will be 
denoted $\Z_\bullet$. We shall go deeper into the notion of shift equivalence in 
Section~\ref{sec:Hopf} (and interested readers are referred
to~\cite{bp2018,nbSqPO2019,Behr2021compositionality} 
for the full background information and details).
Here we just state the following Proposition~\ref{prop:shiftequiv}, which gives an alternative
approach to shift equivalence.

A {\em splitting vertex} of a tracelet is an inner vertex for which the corresponding rule 
overlap is trivial.
This property is invariant 
under precomposition with active maps. (That is, if $\sigma' = g(\sigma)$
for $g$ an active map not eliminating vertex $v$, then $v$ is splitting for
$\sigma'$ if and only if it is splitting for $\sigma$.) Second, there is a
transitive property in connection with `stages' in the sense of
higher-order simplices of $\Y_\bullet$.
Note that this transitive property does {\em not} imply that 
irreducibility is compatible with inert maps (outer face maps).

A non-degenerate tracelet $T \in \Y_1 = \sum_n \nondeg \X_n$ is {\em
primitive} if it does not admit any splitting. (A higher-dimensional 
simplex $\sigma\in\Y_k$ (that is a subdivided tracelet) is primitive if its long edge is
primitive in $\Y_1 = \sum_n \nondeg \X_n$ (that is, its underlying tracelet
is primitive).)

\begin{lemma}
Every maximal splitting of a given simplex has, up to isomorphism and permutation, the 
same primitive pieces.
\end{lemma}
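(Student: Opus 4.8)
The plan is to prove a Jordan–Hölder–type uniqueness statement: any two maximal splittings of a tracelet into primitive pieces agree up to isomorphism and permutation. Since the relevant structure is that of splitting vertices (inner vertices with trivial overlap), I would first set up the combinatorics of splittings. A \emph{splitting} of a tracelet $T$ corresponds to selecting a set of splitting vertices, and a \emph{maximal} splitting is one where every internal vertex of each resulting piece is non-splitting (i.e.\ each piece is primitive). Concretely, a maximal splitting decomposes $T$ under the monoidal operation $\odot$ into primitive tracelets $T_1,\dots,T_r$, corresponding to cutting $T$ at all its splitting vertices.

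The first step is to make precise that splitting a tracelet at a splitting vertex $v$ expresses it as a $\odot$-product, using the stated fact that a splitting vertex has trivial rule overlap and invoking the definition $T \odot T' := \TcompGT{T}{\emptyset}{T'}{}$. Here I would appeal to the two properties already recorded before the statement: invariance of the splitting property under active maps not eliminating the vertex, and the transitivity property in connection with `stages' of $\Y_\bullet$. The transitivity property is what guarantees that the order in which one cuts at different splitting vertices does not matter — cutting at $v$ and then at $w$ in a resulting piece yields the same primitive components as cutting at $w$ first. This is the key local confluence input.

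The main strategy is then a confluence/exchange argument. Given two maximal splittings, I would show that any two splitting vertices can be cut in either order with the same result (a local diamond lemma step), using transitivity together with the invariance under active maps to identify the pieces across the two orders up to isomorphism. Because every maximal splitting arises by cutting at the \emph{full} set of splitting vertices of $T$ (maximality forces this — any uncut splitting vertex would leave a non-primitive piece, contradicting maximality), the two splittings cut at the same underlying set of vertices. Hence they produce the same multiset of primitive pieces up to isomorphism, and they differ only in the recorded ordering, which is exactly the `up to permutation' clause. The bookkeeping of the permutation comes from the symmetry of $\odot$ under shift equivalence, which is the relation the pieces live in.

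The hard part will be establishing the local exchange/diamond step rigorously in the homotopy setting: one must show that cutting at two distinct splitting vertices commutes \emph{up to the specified isomorphisms} of tracelets, and that these isomorphisms are coherent enough to glue into a single isomorphism of primitive decompositions. This is where the warning that the splitting property is \emph{not} compatible with inert maps becomes delicate — one cannot freely pass between outer faces, so the argument must stay within the active-map invariance and the stage-wise transitivity. I expect that invoking the decomposition-space axiom (the canonical-square homotopy pullback for active maps $\alpha:[k]\actto[n]$) is precisely what supplies the coherence, since it lets one reconstruct a higher simplex realizing the simultaneous splitting, from which both orders of cutting are recovered as faces. Once that higher `mould' is in hand, uniqueness up to isomorphism and permutation follows formally, much as in the classical abstract Jordan–Hölder argument driven by a Zassenhaus-style refinement.
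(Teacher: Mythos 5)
Your overall architecture --- a Jordan--H\"older-style argument in which maximal splittings are identified with ``cut at every splitting vertex'' and uniqueness follows from order-independence of cuts --- is the right shape for this statement (the paper, an extended abstract, records no proof, so your attempt stands on its own). But the pivotal step is circular as written. You justify ``the two splittings cut at the same underlying set of vertices'' by saying maximality forces cutting at every splitting vertex of $T$, since ``any uncut splitting vertex would leave a non-primitive piece''. That inference requires precisely that a splitting vertex of $T$ \emph{remains} splitting in the piece containing it, where the pieces are obtained by outer (inert) restriction --- and the paper's explicit warning is that the splitting/irreducibility property is \emph{not} compatible with inert maps in general (indeed one can build a primitive tracelet of length $3$ whose outer face is non-primitive). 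You acknowledge the delicacy, but then try to discharge it via the decomposition-space pullback axiom. That axiom cannot do this job: it reconstructs a refining simplex from a mould together with fillers whose long edges match the mould's principal edges, so to build the common refinement of the cuts at $v$ and at $w$ you must already know that $w$ is splitting in the piece above $v$ --- which is the very fact at issue. The symmetric direction is equally necessary and entirely unaddressed: you must also show that no \emph{new} splitting vertices are created in the pieces (reflection), for otherwise a maximal splitting may cut at vertices that are not splitting vertices of $T$ at all, and your identification of the two cut-sets collapses.

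What actually closes the gap is a rewriting-theoretic exchange lemma proved at the level of the adhesive category, not of simplicial generalities: if $v<w$ are inner vertices and the overlap of $\mathsf{in}(T_{>w})$ with $\mathsf{out}(T_{\leq w})$ is trivial in $T$, then the overlap at $w$ recomputed inside the piece $T_{>v}$ (after cutting at the splitting vertex $v$) is again trivial, and conversely any non-trivial overlap at $w$ inside a piece lifts to a non-trivial overlap at $w$ in $T$. Both directions use essentially that the cut at $v$ is along a \emph{trivial} overlap together with the tracelet minimality condition (the intermediate cospan being the pushout of its own pullback); this is exactly what fails for arbitrary outer faces such as $d_0$ deleting a rule that mediates a dependency, which is why the paper's warning does not contradict the lemma. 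The ``transitive property in connection with stages'' that you invoke gives at most the lifting direction (splittings of stages are splittings of the whole), not the restriction direction you lean on. Once this exchange lemma is established by an honest pushout/pullback argument, your argument does go through: every maximal splitting cuts exactly at the splitting vertices of $T$, the pieces are the corresponding restrictions, and the multiset of primitive pieces is determined up to isomorphism and permutation. As it stands, however, the proposal assumes its key lemma rather than proving it.
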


\begin{proposition} \label{prop:shiftequiv}
  Tracelets are shift equivalent in the restricted sense of trivial overlaps if and only if they have the same factorization into primitives.
\end{proposition}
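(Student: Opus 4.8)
The plan is to recognize the statement as the assertion that shift equivalence (in the trivial-overlap sense) is exactly the congruence generated by transpositions of non-interacting pieces, whose equivalence classes are then freely indexed by multisets of the indivisible (primitive) pieces. The preceding Lemma already supplies the crucial ingredient: every tracelet $T$ determines, via any maximal splitting, a well-defined multiset $\pi(T)$ of primitive tracelets (well-defined up to isomorphism and permutation). I would first record the structural fact that underlies everything, namely that a splitting vertex does not merely express a local triviality but separates the tracelet into two \emph{globally} independent parts, since the overlap recorded at an inner vertex is between the accumulated output interface $O_{i\cdots 1}$ and the incoming input interface $I_{i+1}$. Consequently, cutting at all splitting vertices exhibits $T \cong p_1 \odot \cdots \odot p_k$ as an iterated trivial-overlap composite of \emph{pairwise} independent primitives; this is precisely what forces the invariant to be a multiset rather than a mere dependence poset.

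For the forward implication (shift equivalent $\Rightarrow$ same factorization), it suffices to check that a single generating shift move leaves $\pi$ unchanged and then to induct along a finite chain of such moves connecting the two tracelets. A generating move swaps two pieces meeting at a splitting vertex; since their overlap is trivial, this vertex remains a cut point of a maximal splitting after the swap, each swapped piece retains its own primitive decomposition, and no interaction among primitives is created or destroyed. Hence $\pi$ is preserved, and by the Lemma it is a genuine invariant of the shift-equivalence class. The care needed here is to confirm, using the invariance of the splitting property under active maps, that relocating a rule across a trivial-overlap vertex neither merges two primitives into one nor fractures a single primitive.

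For the converse (same factorization $\Rightarrow$ shift equivalent), I would use the decompositions $T \cong p_1 \odot \cdots \odot p_k$ and $T' \cong p'_1 \odot \cdots \odot p'_k$ furnished by maximal splittings. By the Lemma the underlying multisets agree, so after identifying isomorphic primitives the sequence $(p'_j)$ is a permutation of $(p_j)$. Every permutation is a product of adjacent transpositions, and each such transposition swaps two $\odot$-adjacent primitives whose overlap is trivial, that is, it is exactly a shift move. The pairwise independence of distinct primitives noted above guarantees that each intermediate swap is again a legal shift move, so composing them exhibits the desired shift equivalence between $T$ and $T'$.

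The main obstacle I anticipate is the bookkeeping in the forward direction: making fully rigorous that a local shift move --- which in the rewriting-theoretic formulation transports an individual rule ``along the wires'' --- acts on the maximal splitting exactly as a transposition of primitive blocks, so that it can neither coarsen nor refine the primitive factorization. This is where the transitivity property of splittings in connection with stages of $\Y_\bullet$, together with the invariance of splitting vertices under active maps, must be invoked to control the interaction between the purely local move and the global decomposition.
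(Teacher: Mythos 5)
First, a caveat: the paper is an extended abstract and records no proof of this proposition (nor of the preceding Lemma), so there is no line-by-line comparison to make. Your skeleton --- use the Lemma to extract a well-defined multiset $\pi(T)$ of primitive pieces, check invariance under the generating moves, and realize the converse by transpositions of factors --- is surely the intended route. However, there is a genuine gap: the two notions your argument rests on, \emph{splitting vertex} and \emph{restricted shift move}, are used by you in readings under which the proposition is actually false, and the same simple example witnesses both failures.

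The example: let $r_1$ and $r_2$ each create a fresh vertex, and let $r_3$ draw an edge between two existing vertices, matched to the vertices created by $r_1$ and $r_2$. Your ``structural fact that underlies everything'' asserts that a splitting vertex yields a \emph{global} factorization ``since the overlap recorded at an inner vertex is between the accumulated output interface $O_{i\cdots 1}$ and the incoming input interface $I_{i+1}$.'' That inference is a non sequitur: triviality of that overlap only makes the single rule $r_{i+1}$ independent of the prefix; later rules may bridge across the vertex. In the example, the recorded overlap at inner vertex $1$ (input of $r_2$ against output of $r_1$) is empty, yet the length-$3$ tracelet does not factor there --- the accumulated input of the suffix sub-tracelet $(r_2,r_3)$ meets the output of $r_1$ --- and indeed this tracelet is primitive. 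So ``splitting vertex'' must be defined by triviality of the overlap between $\mathsf{out}$ of the prefix sub-tracelet and $\mathsf{in}$ of the \emph{whole} suffix sub-tracelet (the datum governing the simplicial decomposition at that vertex), not by the consecutive-rule overlap your justification appeals to; under your reading, ``cutting at all splitting vertices'' is not even possible, so $\pi(T)$ is undefined.

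The same example breaks your forward direction for the moves you describe. In your final paragraph you take a generating move to ``transport an individual rule along the wires,'' i.e.\ the ordinary rewriting-theoretic shift move applied in context. But swapping the independent rules $r_1$ and $r_2$ \emph{inside} the example tracelet is exactly such a move with trivial overlap, and (for non-isomorphic $r_1$, $r_2$) it turns one primitive tracelet into a non-isomorphic primitive tracelet: the primitive multiset is \emph{not} invariant under those moves, so no amount of bookkeeping will prove what you set out to prove there. Conversely, if you retreat to the literal reading of Definition~\ref{def:se} (only top-level two-block swaps $T_B\odot T_A \mapsto T_A\odot T_B$ of the whole tracelet), these generate only cyclic rotations of the factor word --- for three pairwise independent primitives only $3$ of the $6$ orderings are reachable --- and your converse fails. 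The proposition is true precisely when the restricted relation is the congruence with respect to trivial-overlap composition $\odot$ generated by commutativity swaps of \emph{globally} independent blocks; pinning down this move set, and the global notion of splitting vertex, is not peripheral bookkeeping but the actual mathematical content that your proposal is missing.
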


\begin{proposition}
  Shift equivalence is compatible with the simplicial structure. This
  defines a simplicial groupoid $\Z_\bullet$ with $\Z_k = \Y_k/\sim$. This
  simplicial groupoid is a (locally finite) decomposition space.
\end{proposition}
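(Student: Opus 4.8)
The plan is to prove the three assertions in sequence: that shift equivalence $\sim$ is a congruence for the simplicial structure on $\Y_\bullet$, that the levelwise quotient $\Z_\bullet$ is consequently a simplicial groupoid, and that $\Z_\bullet$ satisfies the decomposition-space axiom and is locally finite. The essential simplification is that $\Y_\bullet = j\lowershriek \nondeg\X_\bullet$ is built by left Kan extension from the inert-restricted presheaf, so its structure maps are of two very different kinds. The active maps (inner face maps and degeneracies) act only on the subdivision index $\alpha$ of a simplex in $\Y_k = \sum_{\alpha:[k]\actto[n]}\nondeg\X_n$ and leave the underlying tracelet untouched; the inert (outer) face maps restrict a subdivided tracelet to a consecutive sub-stage. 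Since $\sim$ is, by Proposition~\ref{prop:shiftequiv}, a condition on the underlying tracelets (equality of primitive factorizations), compatibility with the active maps is immediate, and inserting a trivial rule via a degeneracy introduces no new primitive and so also respects $\sim$.

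The substantive part of the first assertion is therefore compatibility with the inert face maps. Here I would use the two stability properties recorded before Proposition~\ref{prop:shiftequiv}: that the splitting property of an inner vertex is invariant under active maps, and its transitivity across the `stages' of higher simplices of $\Y_\bullet$. Together with the fact that non-degeneracy and splitting can be detected on principal edges (cf.~\cite{Galvez-Kock-Tonks151207577}), these let me track the primitive-block structure of a subdivided tracelet across a restriction to a sub-stage and conclude that restricting two shift-equivalent simplices yields shift-equivalent results. The delicate point to navigate is the warning that primitivity itself is \emph{not} compatible with inert maps: an individual primitive piece may be cut by an outer face, and what must be shown is that the \emph{relation} $\sim$ is nonetheless preserved, because the cut is applied uniformly and respects the block boundaries recorded by the subdivision. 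Granting this, $\sim$ is a congruence, the simplicial identities of $\Y_\bullet$ descend verbatim, and $\Z_\bullet : \simplexcategory\op \to \Grpd$ with $\Z_k = \Y_k/\sim$ is a simplicial groupoid --- the second assertion.

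For the decomposition-space axiom I would not transport homotopy pullbacks formally, but exploit the explicit description established above: for each active $\alpha$, the canonical square exhibits a $k$-simplex of $\Y_\bullet$ as a base tracelet $\tau$ of length $k$ together with a non-degenerate tracelet glued onto each principal edge along its evaluation, i.e.\ as a (homotopy) fiber product taken edge by edge. Passing to shift equivalence replaces the tracelet on each principal edge by its $\sim$-class independently of the other edges, so this edge-wise fiber-product description persists for $\Z_\bullet$; this is exactly the statement that the canonical squares remain homotopy pullbacks, i.e.\ that $\Z_\bullet$ is a decomposition space. Local finiteness is inherited from the finitary hypothesis on $\bfC$, which guarantees that the fiber groupoids entering the incidence-algebra construction have finitely many isomorphism classes and finite automorphism groups; passage to $\sim$-classes only coarsens these groupoids.

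The step I expect to be the main obstacle is the inert-face compatibility of the second paragraph, precisely because of the warning that irreducibility is not preserved by outer faces. The work is to show that although a restriction may break an individual primitive, it acts coherently on the primitive factorizations of whole shift-equivalence classes; this is where the invariance of the splitting property under active maps and its transitivity across stages must be combined carefully, and it is the only place where the geometry of subdivided tracelets enters in an essential way. Once this is secured, the simplicial descent is formal and the decomposition-space property follows cleanly from the edge-wise fiber-product description.
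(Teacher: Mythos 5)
The gap is in the congruence claim itself: the relation you quotient by --- same subdivision index plus equality of primitive factorizations of the underlying tracelet (your reading of Proposition~\ref{prop:shiftequiv}) --- is simply not preserved by the inert face maps, so the step you flag as ``the main obstacle'' is not delicate but false. Concretely, take two non-isomorphic (connected, non-trivial) rules $r_1,r_2$, set $T_i = T(r_i)$, and consider the non-degenerate $2$-tracelets $\sigma = \TcompGT{T_2}{\emptyset}{T_1}{}$ and $\sigma' = \TcompGT{T_1}{\emptyset}{T_2}{}$ (composition along trivial overlap, in the two orders), viewed as $2$-simplices of $\Y_\bullet$ over the identity subdivision $[2]\actto[2]$. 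Both have primitive factorization $\{T_1,T_2\}$, so your relation identifies them; but the inert face $d_0$ restricts each to its last stage, returning $T(r_2)$ for $\sigma$ and $T(r_1)$ for $\sigma'$, which are not shift equivalent. A shift moves a primitive across the block boundary being cut, so no appeal to the cut being ``applied uniformly'' can repair this; outer restriction genuinely destroys global shift equivalence, which is precisely the force of the paper's warning that irreducibility is not compatible with inert maps.

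The paper's construction avoids this by quotienting by a \emph{finer}, stage-wise relation, under which $\sigma$ and $\sigma'$ above remain distinct in degree $2$: a $k$-simplex of $\Z_\bullet$ retains the base (``mould'') tracelet of length $k$ on the nose, and only the non-degenerate tracelets glued onto its principal edges are passed to shift-equivalence classes, edge by edge --- this is the content of the displayed formula for $\Z_k$ in terms of $\widetilde{\Y}_{n_1}\times\cdots\times\widetilde{\Y}_{n_k}$ following the proposition, whose sole well-definedness requirement is the one the paper records: the structure maps $\widetilde{\Y}_{n_i}\to\X_1$ return the long edge (the evaluation), and evaluation \emph{is} shift-invariant. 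With that relation your difficulty evaporates (inert faces merely discard a factor and restrict the mould), and the substantive verification migrates to the active maps, where one must check that gluing two adjacent pieces along the overlap recorded in the mould descends to shift classes --- this is where Proposition~\ref{prop:shiftequiv}, the lemma on maximal splittings, and the associativity/concurrency theorem actually enter. Note that your own third paragraph silently uses this stage-wise relation (``replaces the tracelet on each principal edge by its $\sim$-class independently of the other edges''), which is inconsistent with the global relation your first two paragraphs quotient by; once the relation is corrected to the stage-wise one, that paragraph and your local-finiteness argument via finitarity of $\bfC$ are essentially sound.
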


Note that if $\widetilde \Y_n$ denotes the groupoid of shift equivalence 
classes, then we have
$$
\Z_k = \sum_{[k] \actto [n]} \Y_k \times_{\X_1^k} \big(\widetilde{\Y}_{n_1} 
\times\cdots\times \widetilde{\Y}_{n_k}
\big)
$$
This makes sense: in the fiber product, the maps from the factors 
$\widetilde{\Y}_{n_i}$ return the long edge, which is invariant under 
shift equivalence.

\begin{theorem} There is a level-wise equivalence of groupoids
$$
\Z_k \simeq \SSS(\Y_k^{\operatorname{irr}}) ,
$$
assembling into an equivalence of simplicial groupoids.
here $\SSS$ is the free-symmetric-monoidal-category monad.
In particular, $\Z_\bullet$ is symmetric monoidal under $\odot$.
\end{theorem}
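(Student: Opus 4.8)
The plan is to construct, for each $k$, mutually inverse equivalences between $\Z_k$ and $\SSS(\Y_k^{\operatorname{irr}})$, and then to check that they are natural in $[k]$, so that the symmetric monoidal structure carried for free by $\SSS(\Y_\bullet^{\operatorname{irr}})$ transports to $\Z_\bullet$. In the forward direction I would define $\Phi_k : \SSS(\Y_k^{\operatorname{irr}}) \to \Z_k$ by sending a formal word $(\sigma_1,\dots,\sigma_m)$ of primitive (irreducible) $k$-simplices to the $\odot$-composite $\sigma_1 \odot \cdots \odot \sigma_m$, regarded as an object of $\Z_k = \Y_k/{\sim}$. The entire point of passing to shift equivalence was to render $\odot$ associative and \emph{symmetric} in every simplicial degree, so $\Phi_k$ is insensitive to reordering the word; it therefore descends from $(\Y_k^{\operatorname{irr}})^{m}$ to the homotopy quotient $(\Y_k^{\operatorname{irr}})^{m}/\!\!/\Sigma_m$ and assembles over $m$ into a well-defined functor on $\SSS(\Y_k^{\operatorname{irr}})$.

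For the inverse I would use unique factorization into primitives. Given a class $[\tau]\in\Z_k$, choose a representative and a maximal splitting; the primitive pieces $\sigma_1,\dots,\sigma_m$ so produced are, by the maximal-splitting Lemma, independent of all choices up to isomorphism and permutation, and by Proposition~\ref{prop:shiftequiv} they depend only on the shift-equivalence class, not on the chosen representative. This yields a functor $\Psi_k : \Z_k \to \SSS(\Y_k^{\operatorname{irr}})$. Essential surjectivity of $\Phi_k$ is then exactly the existence half of maximal splitting, while injectivity on isomorphism classes of objects is the uniqueness half together with Proposition~\ref{prop:shiftequiv}: two words compose to shift-equivalent simplices if and only if they agree up to permutation and isomorphism of their primitive entries.

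The remaining substance of full faithfulness lies on automorphisms. I must show that $\operatorname{Aut}_{\Z_k}(\sigma_1 \odot \cdots \odot \sigma_m)$ is precisely the wreath-type group of permutations of mutually isomorphic primitive factors together with automorphisms of the individual factors, i.e.\ $\operatorname{Aut}_{\SSS(\Y_k^{\operatorname{irr}})}(\sigma_1,\dots,\sigma_m)$. The key point is that an automorphism of a trivial-overlap composite cannot ``mix'' distinct primitive blocks, because a splitting vertex, and hence the resulting primitive decomposition, is canonical; this again rests on the maximal-splitting Lemma, applied now to the graph of the automorphism rather than to a single simplex.

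Finally I would verify simplicial naturality and conclude. Degeneracies and the inert (outer) face maps act compatibly with the passage to $\nondeg\X_\bullet$ and so commute with $\Phi_\bullet$; for the inner (active) face maps I would invoke that splitting vertices, and therefore the factorization into primitives, are invariant under precomposition with active maps (the property recorded just before Proposition~\ref{prop:shiftequiv}), so that $\Psi_\bullet$ commutes with inner faces. This simultaneously confirms that $\Y_\bullet^{\operatorname{irr}}$ carries the simplicial structure making $\SSS(\Y_\bullet^{\operatorname{irr}})$ a simplicial groupoid and that the levelwise $\Phi_k$ assemble into a simplicial map; being levelwise equivalences, they form an equivalence of simplicial groupoids. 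Since $\SSS$ of any simplicial groupoid is symmetric monoidal under concatenation of words, transporting along this equivalence exhibits $(\Z_\bullet,\odot)$ as symmetric monoidal, which is the final assertion. The main obstacle I expect is the automorphism computation together with its compatibility with the inner faces: ruling out ``exotic'' automorphisms of a $\odot$-composite created by the quotient, and confirming that the canonical primitive decomposition is respected by the concurrency-theorem-driven inner face maps, is precisely where the geometry of the associativity theorem genuinely enters.
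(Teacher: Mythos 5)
The paper itself, being an extended abstract, states this theorem without proof, so your proposal can only be measured against the supporting results it surrounds the theorem with (the unlabelled lemma on maximal splittings, Proposition~\ref{prop:shiftequiv}, and the remarks on primitivity). Your levelwise skeleton is consistent with that machinery: mutually inverse functors given by $\odot$-composition of a word of primitives and by unique factorization into primitives, with essential surjectivity coming from existence of maximal splittings, injectivity on iso-classes from uniqueness plus Proposition~\ref{prop:shiftequiv}, and a genuine remaining obligation in computing automorphism groups (the wreath-type description). Identifying the automorphism analysis as real work is correct.

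The genuine gap is in your simplicial-assembly step, which rests on a claim the paper explicitly denies: you assert that ``$\Y_\bullet^{\operatorname{irr}}$ carries the simplicial structure making $\SSS(\Y_\bullet^{\operatorname{irr}})$ a simplicial groupoid,'' whereas the paper notes immediately after the theorem that the primitive tracelets do \emph{not} form a simplicial groupoid, precisely because the outer (inert) face maps do not preserve primitivity. Moreover you have located the difficulty at the wrong class of maps. Primitivity of a higher simplex is measured on its long edge; inner (active) faces and degeneracies preserve the long edge and hence primitivity, so the inner faces you single out as the hard case are in fact the easy one. The outer faces change the long edge: for instance, $d_0$ of a primitive $2$-simplex returns the tracelet glued onto one principal edge, which may well split (take $\sigma_1 = A \odot B$ with $A,B$ disjoint rules and $\sigma_2$ a rule overlapping both nontrivially; the composite is primitive but $\sigma_1$ is not). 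Consequently the outer face maps of $\SSS(\Y_\bullet^{\operatorname{irr}})$ cannot be induced levelwise from $\Y_\bullet^{\operatorname{irr}}$: they must be defined by applying the outer face in $\Y_\bullet$ and then \emph{re-factoring} the result into primitives (well defined by the maximal-splitting lemma and Proposition~\ref{prop:shiftequiv}), after which the simplicial identities have to be verified for these composite operations. That re-factorization step is where the substance of ``assembling into an equivalence of simplicial groupoids'' lies, and it is absent from your argument. A smaller circularity: your $\Phi_k$ presupposes that $\odot$ is well defined on every $\Z_k$, yet your conclusion derives the symmetric monoidal structure on $\Z_\bullet$ by transport along $\Phi$; you should either prove directly that trivial-overlap composition descends to shift classes in all degrees, or work exclusively with free concatenation on $\SSS(\Y_\bullet^{\operatorname{irr}})$ and transport it, but not assume the one to conclude the other.
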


Note that the primitive tracelets themselves do not form a simplicial
groupoid, as the outer face map
applied to a primitive tracelet is not necessarily primitive.
But after we apply $\SSS$, which is just a fancy way of
saying `monomials of' or `families of', it does work.

The upshot is now that the standard incidence algebra construction 
(cf.~\cite{Galvez-Kock-Tonks151207573}) yields a Hopf algebra $H$ of tracelets up
to shift equivalence. %
This is the Hopf algebra we are really interested in, and towards which
the previous ones were preliminary constructions. By 
Poincar\'e--Birkhoff--Witt, $H$ is the enveloping algebra of the Lie 
algebra of primitive tracelets. In the next section we spell out the structure maps of this Hopf algebra 
in details.

\section{The Hopf algebra of tracelets}
\label{sec:Hopf}

The construction given of the tracelet Hopf algebra from the viewpoint of
decomposition spaces gives it a certain canonical feel, but it requires a
lot of machinery. However, the Hopf algebra can also be described directly
(via an extension of the rule diagram Hopf algebra construction of~\cite{bdgh2016}, which was based upon relational calculus), which we briefly describe in
this final section. %
Throughout, we fix a field $\bK$ that will typically be chosen
as either $\bR$ or $\bC$ (or, possibly, $\bQ$). An essential prerequisite
for our Hopf algebra construction is given by the following equivalence relations.
\begin{blanko}{Shift equivalence (cf.\ \cite{behr2019tracelets}).}\label{def:se} %
Let $\equiv_S$ denote the equivalence relation on $\cT$ defined as the
  reflexive symmetric transitive closure of the relation on pairwise
  composition operations on tracelets: let $T= \TcompGT{T_B}{\mu}{T_A}{}$
  (for some admissible match $\mu=(I_B\leftarrow M\rightarrow O_A)$), and
  denote by $[[T_B]]=(O_B\leftarrow K_B\rightarrow I_B)$ and
  $[[T_A]]=(O_A\leftarrow K_A\rightarrow I_A)$ the evaluations of $T_B$ and
  $T_A$, respectively. Suppose $[[T_B]]$ and $[[T_A]]$ are
  \emph{sequentially independent} in the composition along $\mu$, which
  entails that $M$ is isomorphic to both the pullbacks of the cospans
  $K_B\rightarrow I_B\leftarrow M$ and $M\rightarrow O_A\leftarrow K_A$,
  respectively. In this situation we define the composite tracelet $\overline{T}=
  \TcompGT{T_A}{\overline{\mu}}{T_B}{}$ (for $\overline{\mu}=I_A\leftarrow
  M\rightarrow O_B$) to be \emph{shift equivalent} to the tracelet $T=
  \TcompGT{T_B}{\mu}{T_A}{}$.
\end{blanko} 

\begin{blanko}{Normal form equivalence (cf.\ \cite{behr2019tracelets}).}\label{def:nfe} %
Let $\equiv_A$ denote an equivalence relation on $\cT$ (so-called
  \emph{abstraction equivalence}) whereby $T\equiv_A T'$ if $T$ and $T'$
  are tracelets of the same length, and if moreover there exists an
  isomorphism $T\xrightarrow{\cong}T'$ (induced from isomorphisms on
  objects so that the resulting diagram commutes). Let $\equiv_T$ be
  defined as the reflexive symmetric transitive closure of a relation
  whereby for any $T\in \cT$, we let $T\equiv_T T\uplus T_{\mIO}\equiv_T
  T_{\mIO}\uplus T$ (with $T_{\mIO}:=T(\mIO\leftarrow\mIO\rightarrow
  \mIO)\in \cT_1$, and where $\uplus:=\TcompGT{}{\mu_{\mIO}}{}{}$ denotes tracelet composition along trivial overlap). Then we define the \emph{tracelet normal form}
  equivalence relation as
  $\equiv_N:=\mathit{rst}({\equiv_A}\cup{\equiv_T}\cup{\equiv_S})$, i.e., as the reflexive symmetric transitive closure of the union of the aforementioned three relations.
\end{blanko}

\begin{definition}[Primitive tracelets]
  Denote by $\mathfrak{Prim}(\cT_N)$ the set of \emph{primitive tracelets}, defined as
  \begin{equation}
    \mathfrak{Prim}(\cT_N):=\{ [T]_{\equiv_N}\vert T\neq T_{\mIO}\land \not \exists T_A,T_B\neq T_{\mIO}: T\equiv_N T_A\uplus T_B\}\,.
  \end{equation}
\end{definition}
Primitive tracelets play a central role in our construction, since they are in a certain sense the smallest ``indecomposable'' building blocks of tracelets with respect to (de-)composition (just as primitive \emph{rule diagrams} in~\cite{bdgh2016}).

\begin{proposition}[Tracelet normal form]\label{prop:TNF}
  Every tracelet $T\in\cT$ is $\equiv_N$-equivalent to a \emph{tracelet 
  normal form} in the sense that $T_{\mIO}\equiv_N T_{\mIO}$, and\footnote{We chose 
  to make the case distinction explicit in order to emphasize that the normal form of a non-trivial tracelet $T\neq T_{\mIO}$ does itself not contain trivial sub-tracelets, so that manifestly $T_i\in \mathfrak{Prim}(\cT_N)$ in $T\equiv_N \uplus_{i\in I}T_i$. This is clearly the case, since invoking $\equiv_T$ on $\uplus_{i\in I}T_i$ would in effect remove any trivial constituent $T_i=T_{\mIO}$.} $\forall T\neq T_{\mIO}: T\equiv_N \biguplus_{i\in I}T_i$, where $T_i\in \mathfrak{Prim}(\cT_N)$ for all $i\in I$, and with $I$ a (finite) index set.
\end{proposition}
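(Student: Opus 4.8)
The plan is to prove existence of the factorization by well-founded induction on a length-type invariant, in exactly the way one shows that every element of a monoid factors into irreducibles. First I would isolate the right measure. Since $\equiv_A$ only replaces a tracelet by an isomorphic copy, $\equiv_S$ merely exchanges the order of two sequentially independent constituents, and $\equiv_T$ inserts or deletes a trivial rule $T_{\mIO}$, the number of \emph{non-trivial} rules (plaquettes) of a tracelet is invariant under each generating relation, hence under $\equiv_N$. Call this the \emph{reduced length} $\ell(T)$. The key additivity fact I would record is $\ell(T_A\uplus T_B)=\ell(T_A)+\ell(T_B)$, which is immediate because composition along the trivial overlap $\mu_{\mIO}$ simply juxtaposes the two chains of plaquettes without introducing or cancelling any non-trivial rule.

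With $\ell$ in hand I would run strong induction on $\ell(T)$. The base case $\ell(T)=0$ says every rule of $T$ is trivial; using $\equiv_S$ to slide each trivial constituent to an end of the chain (a trivial rule is sequentially independent of every neighbour, since in any overlap $\mu=(I_B\leftarrow M\rightarrow O_A)$ with one side equal to $T_{\mIO}$ one has a mono $M\to\mIO$, forcing $M\cong\mIO$ by strict initiality, so the overlap is trivial and the sequential-independence witness holds automatically) and then $\equiv_T$ to delete it, one obtains $T\equiv_N T_{\mIO}$, which is the first clause of the statement. For the inductive step with $\ell(T)=n\geq 1$ I would first pass to an $\equiv_N$-equivalent representative with no trivial constituents, so that its ordinary length is $n$. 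Either $[T]_{\equiv_N}\in\mathfrak{Prim}(\cT_N)$, in which case $T$ is its own length-one factorization, or by the very definition of primitivity there are non-trivial $T_A,T_B$ with $T\equiv_N T_A\uplus T_B$. Additivity of $\ell$ then gives $\ell(T_A)+\ell(T_B)=n$ with both summands $\geq 1$, hence $\ell(T_A),\ell(T_B)<n$, so the induction hypothesis applies to each: $T_A\equiv_N\biguplus_{i\in I_A}T_i$ and $T_B\equiv_N\biguplus_{j\in I_B}T_j$ with all factors in $\mathfrak{Prim}(\cT_N)$.

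To splice these together I would invoke two structural facts about $\uplus$. The first is associativity of tracelet composition along trivial overlaps up to $\equiv_A$ (a special case of the associativity/concurrency theorem already used to set up the simplicial structure), which lets me read $(\biguplus_i T_i)\uplus(\biguplus_j T_j)$ as a single iterated disjoint union $\biguplus_{k\in I_A\sqcup I_B}T_k$. The second is that $\equiv_N$ is a congruence for $\uplus$, i.e.\ $T_A\equiv_N T_A'$ implies $T_A\uplus T_B\equiv_N T_A'\uplus T_B$ and symmetrically; this I would verify generator by generator, the $\equiv_A$ and $\equiv_T$ cases being immediate and the $\equiv_S$ case amounting to the observation that a shift performed entirely inside $T_A$ stays a shift after gluing the disjoint $T_B$ on, since trivial overlap does not disturb the sequential independence of the internal pair. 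Combining the two yields $T\equiv_N\biguplus_k T_k$ with every $T_k$ primitive, and finiteness of the index set is automatic from $\ell(T)=\sum_k\ell(T_k)$ with each $\ell(T_k)\geq 1$, closing the induction.

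The step I expect to be the genuine obstacle is precisely the congruence property for $\equiv_S$ under $\uplus$, together with the well-definedness underpinnings: one must check that composing a shift-admissible pair with a third, disjointly glued tracelet neither destroys the pushout-complement existence conditions that make $\uplus$ defined, nor disturbs the sequential-independence witness (the isomorphism of $M$ with the two relevant pullbacks). Both reduce to strict initiality of $\mIO$ and to stability of pushouts and pullbacks along monomorphisms in an adhesive category, but they are exactly the places where the ``up to $\equiv_N$'' bookkeeping must be carried out carefully rather than waved through, so I would treat them as the technical heart of the argument.
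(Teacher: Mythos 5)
The paper never actually proves Proposition~\ref{prop:TNF}: this is an extended abstract, and the only argument supplied is the footnote explaining why the normal form of a non-trivial tracelet contains no trivial constituents. So your attempt can only be compared against the framework the paper sets up, and measured that way your route is the natural one: your strong induction on the reduced length $\ell$ (the number of non-trivial plaquettes, which you correctly check to be $\equiv_N$-invariant and additive under $\uplus$) is the elementary counterpart of the route implicit in Section~\ref{sec:DST}, where the factorization is obtained by splitting a simplex at all of its splitting vertices (inner vertices with trivial rule overlap) and invoking the Lemma that maximal splittings have well-defined primitive pieces. The paper's route localizes the work in the geometry of splitting vertices; yours works purely with the algebraic definition of $\mathfrak{Prim}(\cT_N)$, never having to identify ``primitive'' with ``admits no splitting vertex'', which is a genuine simplification if existence of the factorization is all one wants.

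Two points need more care than your sketch gives them. First, in the inductive step you need ``non-trivial'' in the definition of $\mathfrak{Prim}(\cT_N)$ to mean $T_A,T_B\not\equiv_N T_{\mIO}$ (a condition on $\equiv_N$-classes) rather than the literal $T_A,T_B\neq T_{\mIO}$ written in the paper; otherwise $T_B=T_{\mIO}\uplus T_{\mIO}$ defeats your claim that both summands have $\ell\geq 1$. (The class-level reading is forced anyway: under the literal reading every class would split off a factor $T_{\mIO}\uplus T_{\mIO}$ and $\mathfrak{Prim}(\cT_N)$ would be empty.) Second, the congruence issue you flag is even more pervasive than you indicate: as literally stated in Definitions~\ref{def:se} and~\ref{def:nfe}, $\equiv_S$ and $\equiv_T$ are reflexive--symmetric--transitive closures of relations on whole tracelets, with no contextual closure, so even your base case is not licensed by the generators as written --- the swap $\TcompGT{T_{\mIO}}{\mu}{T_A}{}\equiv_S\TcompGT{T_A}{\overline{\mu}}{T_{\mIO}}{}$ must be applied \emph{inside} the context of the remaining chain, and $\equiv_T$ deletes trivial pieces only at the outermost level. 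Hence compatibility of the generating relations with tracelet composition is not merely the heart of your splicing step; it is needed before any step of the induction can fire, and it must either be proved (via the associativity/concurrency theorem, as you indicate) or built into the definition of $\equiv_N$, as is effectively done in the tracelet-theory reference. With that understood, your argument goes through.
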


\begin{definition}[Tracelet $\bK$-vector space $\hat{\cT}$]
  Let $\hat{\cT}$ be the $\bK$-vector space spanned by a basis indexed by
  $\equiv_N$-equivalence classes, in the sense that there exists an
  isomorphism $\delta:\cT_N \isopil \mathit{basis}(\hat{\cT})$ from the
  set\footnote{Here, we tacitly assume that the $\equiv_N$-equivalence
  classes indeed form a proper \emph{set}, which is in all known
  applications the case since abstraction equivalence $\equiv_A$ is part of
  the definition of $\equiv_N$. For example, it is well known that
  the isomorphism classes of finite directed multigraphs indeed form a set.} of
  $\equiv_N$-equivalence classes of tracelets $\cT_N:=\cT\diagup_{\equiv_N}$
  to the set of basis vectors $\mathit{basis}(\hat{\cT})$. We will use the
  notation $\hat{T}:=\delta(T)$ for the basis vector associated to some
  class $T\in \cT_N$. We denote by $\mathit{Prim}(\hat{\cT})\subset\hat{\cT}$ the sub-vector space of $\hat{\cT}$ spanned by basis vectors indexed by primitive tracelets.
\end{definition}

\begin{definition}[Tracelet algebra product and unit]
Let $\otimes\equiv \otimes_{\bK}$ be the tensor product operation on the 
$\bK$-vector space $\hat{\cT}$. Then the \emph{multiplication map} $\mu$ and the 
\emph{unit map} $\eta$ are defined via their action on basis vectors of $\hat{\cT}$ as follows:
  \begin{align}
  \begin{split}
  \mu&:\hat{\cT}\otimes \hat{\cT}\rightarrow \hat{\cT}: \hat{T}\otimes\hat{T}'\mapsto \tapGT{\hat{T}}{\hat{T}'}{}\,,\qquad 
    \tapGT{\hat{T}}{\hat{T}'}{}
    :=\sum_{\mu\in \tMatchGT{T}{T'}{}}\delta\left(\left[
      \TcompGT{T}{\mu}{T'}{}
    \right]_{\equiv_N} 
    \right)
  \end{split}\\
  \eta&:\bK\rightarrow \hat{\cT}: k\mapsto k\cdot \hat{T}_{\mIO}\,.
  \end{align}
  Both definitions are suitably extended by (bi-)linearity to generic (pairs of) elements of $\hat{\cT}$.
\end{definition}

\begin{proposition}\label{prop:TraceletAlgebra}
  The morphisms $\mu$ and $\eta$ define an \emph{associative, unital $\bK$-algebra} $(\hat{\cT},\mu,\eta)$, which we refer to as \emph{tracelet algebra}.
\end{proposition}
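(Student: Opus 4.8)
The plan is to verify the three defining axioms of an associative, unital $\bK$-algebra directly on basis vectors, extending by bilinearity afterwards. Concretely, I must check: (i) that $\mu$ is a well-defined map on $\hat{\cT}\otimes\hat{\cT}$, meaning the sum $\tapGT{\hat T}{\hat T'}{}$ is independent of the chosen $\equiv_N$-representatives of $T$ and $T'$; (ii) associativity, i.e.\ $\mu\circ(\mu\otimes\mathrm{id})=\mu\circ(\mathrm{id}\otimes\mu)$; and (iii) the left and right unit laws with respect to $\eta$. Throughout, I would work with the explicit formula $\tapGT{\hat T}{\hat T'}{}=\sum_{\mu\in \tMatchGT{T}{T'}{}}\delta\big([\TcompGT{T}{\mu}{T'}{}]_{\equiv_N}\big)$, treating the product as a sum over all admissible matches.

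First I would dispatch the unit laws, which are the easiest. The unit is $\hat T_{\mIO}$, the class of the trivial tracelet $T_{\mIO}=T(\mIO\leftarrow\mIO\rightarrow\mIO)$. For the right unit, the admissible matches $\mu\in\tMatchGT{T}{T_{\mIO}}{}$ of $T$ into $T_{\mIO}$ must overlap with the empty output interface, so the only admissible $\mu$ is the trivial overlap $\mu_{\mIO}$, giving $\TcompGT{T}{\mu_{\mIO}}{T_{\mIO}}{}=T\uplus T_{\mIO}$, which is $\equiv_N$-equivalent to $T$ by the relation $\equiv_T$ built into $\equiv_N$ (Definition~\ref{def:nfe}); symmetrically for the left unit. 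Hence $\mu(\hat T\otimes\hat T_{\mIO})=\hat T=\mu(\hat T_{\mIO}\otimes\hat T)$. For well-definedness (i), I would argue that if $T\equiv_N \tilde T$ then there is a bijection between $\tMatchGT{T}{T'}{}$ and $\tMatchGT{\tilde T}{T'}{}$ compatible with the $\equiv_N$-classes of the resulting composites: this follows because $\equiv_N$ is generated by abstraction equivalence $\equiv_A$ (isomorphisms, which transport matches bijectively), the trivial-tracelet relation $\equiv_T$ (which only adjoins or deletes $T_{\mIO}$ summands, absorbed again into the $\equiv_N$-class of the output), and shift equivalence $\equiv_S$ (which reorders sequentially independent constituents without altering the evaluation, hence without altering which $\mu$ are admissible matches on the relevant interface).

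The heart of the proof is associativity (ii). Expanding both composites, I must establish a bijection, compatible with $\equiv_N$-classes, between (a) the set of pairs $(\mu,\nu)$ where $\mu\in\tMatchGT{T}{T'}{}$ realizes $S:=\TcompGT{T}{\mu}{T'}{}$ and $\nu\in\tMatchGT{S}{T''}{}$, and (b) the analogous set where one first composes $T'$ with $T''$ and then composes $T$ with the result. This is precisely the \emph{associativity property} of tracelet composition announced in the discussion following the Tracelet composition definition and illustrated in Figure~\ref{fig:TA3definingProperty}: by the defining recursive property of tracelets, an $n$-step nested composition can be reassociated, up to universal isomorphism, into the other nesting order. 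Passing to $\equiv_N$-classes absorbs exactly these universal isomorphisms (via $\equiv_A$) as well as any reordering of disconnected pieces (via $\equiv_S$), so the two sides of the associativity equation agree term by term after the bijection.

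The hard part will be step (ii): making the reassociation bijection fully precise while handling the sum over all matches simultaneously, since tracelet composition is genuinely non-deterministic and the intermediate object $S$ in (a) ranges over a set of composites, each with its own further matches into $T''$. The crux is the concurrency/associativity theorem of~\cite{bp2019-ext,bdg2016,bdgh2016} (equivalently, the decomposition-space pullback condition established for $\X_\bullet$ in this paper, which encodes exactly this reassociation as a homotopy pullback), guaranteeing that the fibered set of nested matches is independent of the nesting order up to canonical isomorphism. Since the algebra product is defined on $\equiv_N$-classes, which quotient out precisely these canonical isomorphisms and the accompanying shift and trivial-tracelet identifications, the term-by-term match persists after quotienting, yielding the strict equality of the two triple composites and hence associativity.
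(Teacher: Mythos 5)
Your proposal is correct and takes essentially the approach the paper intends: Proposition~\ref{prop:TraceletAlgebra} is stated there without an explicit proof, the intended argument being exactly yours --- unit laws from strict initiality of $\mIO$ (forcing the unique trivial overlap with $T_{\mIO}$) combined with the relation $\equiv_T$, and associativity via a bijection of nested match sets supplied by the associativity/concurrency theorem of~\cite{bp2019-ext,bdg2016,bdgh2016}, which the paper repeatedly identifies as the crux (``at the heart of the algebraic properties of tracelets''). The one ingredient you treat that the paper leaves entirely tacit is well-definedness of the product on $\equiv_N$-classes (in particular that $\equiv_S$ is a congruence for composition, i.e.\ that shift-equivalent representatives have corresponding admissible match sets $\tMatchGT{T}{T'}{}$ with $\equiv_N$-equivalent composites); your sketch of this is correct in substance and matches the paper's own unproved assertion that shift equivalence is compatible with the simplicial structure of $\Y_\bullet$, though note that admissibility of a match depends on the whole tracelet (all $\mathsf{DPO}$/$\mathsf{DPO}^{\dag}$ squares), not just on interfaces, so this step too ultimately leans on the same concurrency-theorem machinery rather than on evaluations alone.
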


\begin{definition}[Tracelet coproduct and counit]
  Fixing the \emph{notational convention} $\uplus_{i\in\emptyset}T_i:=T_{\mIO}$ for later convenience, let $T\equiv_N \uplus_{i\in I}T_i$ be the tracelet normal form for a given tracelet $T\in\cT$ (where $T_i\in \mathfrak{Prim}(\cT_N)$ for all $i\in I$ if $T\neq T_{\mIO}$). Then the \emph{tracelet coproduct} $\Delta$ and \emph{tracelet counit} $\varepsilon$ are defined via their action on basis vectors $\hat{T}=\delta(T)$ of $\hat{\cT}$ as
  \begin{align}
    \Delta&: \hat{\cT}\rightarrow \hat{\cT}\otimes \hat{\cT}:\hat{T}\mapsto
    \Delta(\hat{T}):=\sum_{X\subset I}\delta\left(\left[\uplus_{x\in X}T_x\right]_{\equiv_N}\right)\otimes\delta\left(\left[\uplus_{y\in I\setminus X}T_y\right]_{\equiv_N}\right)
  \end{align}%
and $\varepsilon:\hat{\cT}\rightarrow \bK: \hat{T}\mapsto \mathit{coeff}_{\hat{T}_{\mIO}}(\hat{T})$. Both definitions are extended by linearity to generic elements of $\hat{\cT}$.
\end{definition}

\begin{proposition}
    The data $(\hat{\cT},\Delta,\varepsilon)$ defines a \emph{coassociative, cocommutative and counital coalgebra}.
\end{proposition}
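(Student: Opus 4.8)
The plan is to reduce all three axioms to the purely combinatorial structure of the index families appearing in tracelet normal forms. Since $\Delta$ and $\varepsilon$ are defined by linear extension, it suffices to verify coassociativity, cocommutativity and counitality on basis vectors $\hat{T}=\delta(T)$. By Proposition~\ref{prop:TNF} each such $T$ has a normal form $T\equiv_N\biguplus_{i\in I}T_i$ with $T_i\in\mathfrak{Prim}(\cT_N)$, and by the preceding Lemma this family of primitives is determined up to isomorphism and permutation of $I$. Because the coproduct formula is a sum over all subsets $X\subset I$ and is manifestly invariant under permutations of $I$, this first establishes that $\Delta$ is well defined on $\equiv_N$-classes; more importantly it identifies $(\hat{\cT},\Delta,\varepsilon)$ with the standard unshuffle coalgebra on the free commutative monoid generated by the primitive tracelets, for which the three axioms are classical.

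For coassociativity I would compute both iterated coproducts explicitly. Applying $(\Delta\otimes\mathrm{id})\circ\Delta$ to $\hat{T}$ yields $\sum_{X\subset I}\sum_{X'\subset X}\delta([\uplus_{x\in X'}T_x])\otimes\delta([\uplus_{x\in X\setminus X'}T_x])\otimes\delta([\uplus_{y\in I\setminus X}T_y])$, while $(\mathrm{id}\otimes\Delta)\circ\Delta$ yields the analogous triple sum. The key step is the reindexing bijection sending each nested pair of subsets to the ordered triple of blocks $(A,B,C)$ with $A\sqcup B\sqcup C=I$; both iterated coproducts are thereby rewritten as one and the same sum over ordered partitions of $I$ into three disjoint (possibly empty) blocks, proving equality.

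For cocommutativity I would invoke the involution $X\mapsto I\setminus X$ on the subsets of $I$: this is a bijection exchanging the two tensor factors in each summand, so composing $\Delta$ with the symmetry of $\hat{\cT}\otimes\hat{\cT}$ reproduces $\Delta$. For counitality, note that $\varepsilon(\delta([\uplus_{x\in X}T_x]))$ equals $1$ when $\uplus_{x\in X}T_x\equiv_N T_{\mIO}$ and $0$ otherwise; since each $T_x$ is primitive, hence not $\equiv_N$-equivalent to $T_{\mIO}$, a nonempty $\uplus$-product of the $T_x$ is already in normal form and cannot be trivial, so this coefficient is nonzero exactly for $X=\emptyset$. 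Hence $(\varepsilon\otimes\mathrm{id})\Delta(\hat{T})$ collapses to the single term $\delta([\uplus_{y\in I}T_y])=\hat{T}$, and symmetrically $(\mathrm{id}\otimes\varepsilon)\Delta(\hat{T})=\hat{T}$.

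The main obstacle is not in the coalgebra computations, which are the routine manipulations of the unshuffle coproduct, but in the input that makes them legitimate: the uniqueness up to isomorphism and permutation of the primitive factorization (the preceding Lemma, resting ultimately on Proposition~\ref{prop:TNF} and the shift-equivalence analysis). Once that canonicity is in hand, the reduction to a free commutative monoid on primitives is immediate and the three axioms follow formally; the only point demanding care is the non-triviality of nonempty products of primitives used in the counit argument, which is exactly the content of the footnote to Proposition~\ref{prop:TNF}.
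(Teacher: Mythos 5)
Your proof is correct and takes essentially the same route as the paper: the paper's entire proof is the remark that $(\Delta,\varepsilon)$ is the standard deconcatenation (unshuffle) coalgebra construction, with the verification omitted and deferred to \cite{manchon2008hopf}, and your subset-reindexing argument for coassociativity, the involution $X\mapsto I\setminus X$ for cocommutativity, and the counit computation are exactly that standard verification spelled out. You also make explicit the well-definedness input (uniqueness of the primitive factorization up to isomorphism and permutation, plus non-triviality of nonempty products of primitives), which the paper leaves implicit.
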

\begin{proof}
  Since the construction of $\Delta$ and $\varepsilon$ is the standard construction for a deconcatenation coalgebra (cf.\ e.g.\ \cite{manchon2008hopf}), the proof is omitted here for brevity.
\end{proof}

The algebra and coalgebra structures on $\hat{\cT}$ are compatible in the following sense:

\begin{theorem}[Bialgebra structure]\label{thm:Tbialg}
  The data $(\hat{\cT},\mu,\eta,\Delta,\varepsilon)$ defines a
  \emph{bialgebra}. 
\end{theorem}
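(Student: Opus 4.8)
The plan is to verify the bialgebra compatibility axiom, namely that the coproduct $\Delta$ and counit $\varepsilon$ are algebra homomorphisms (equivalently, that $\mu$ and $\eta$ are coalgebra homomorphisms). Concretely, I must check the four standard compatibility conditions: (i) $\Delta\circ\mu = (\mu\otimes\mu)\circ(\mathrm{id}\otimes\tau\otimes\mathrm{id})\circ(\Delta\otimes\Delta)$ on $\hat\cT\otimes\hat\cT$, where $\tau$ is the symmetry; (ii) $\varepsilon\circ\mu = \varepsilon\otimes\varepsilon$; (iii) $\Delta\circ\eta = \eta\otimes\eta$ (i.e.\ $\hat T_{\mIO}$ is grouplike); and (iv) $\varepsilon\circ\eta = \mathrm{id}_{\bK}$. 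Since all maps are defined on basis vectors and extended linearly, it suffices to check these on basis elements $\hat T, \hat T'$, and by Proposition~\ref{prop:TNF} every basis vector is indexed by a normal form $\biguplus_{i\in I}T_i$ with the $T_i$ primitive.

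The conceptual heart is to recognize that $\Delta$ is the deconcatenation (in fact \emph{unshuffle}) coproduct on the free commutative structure generated by primitives, and that the product $\mu$ summing over all matches $\mu\in\tMatchGT{T}{T'}{}$ interacts with the primitive decomposition in the expected way. First I would reduce the problem using the normal form: write $\hat T = \delta[\biguplus_{i\in I}T_i]$ and $\hat T' = \delta[\biguplus_{j\in J}T'_j]$ with all pieces primitive. The key structural input is that tracelet composition $\TcompGT{T}{\mu}{T'}{}$, read modulo $\equiv_N$, respects the splitting into primitive pieces: by Proposition~\ref{prop:shiftequiv} and the normal-form Proposition~\ref{prop:TNF}, the primitives appearing in any composite of $T$ with $T'$ are obtained by \emph{merging} certain primitive constituents of $T$ with certain constituents of $T'$ along connected overlaps, while the constituents left un-overlapped survive as separate primitive factors. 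Thus each match $\mu$ determines, after passing to normal form, a new collection of primitives, and the grouplike splitting behaviour of $\Delta$ follows from tracking which side of the cut each resulting primitive lands on.

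The main obstacle I anticipate is condition (i), the multiplicativity of $\Delta$, and specifically the bookkeeping that shows the double sum on the left (sum over matches, then over subsets of the resulting index set of primitives) equals the right-hand side (split $I$ and $J$ first, then match the two pieces and again the two complementary pieces). The delicate point is that a primitive factor produced by a \emph{nontrivial} overlap between a constituent of $T$ and a constituent of $T'$ must be assigned coherently to one block of the $\Delta$-splitting; this is exactly where the restriction to \emph{trivial}-overlap splitting in the definition of primitivity (splitting vertices have trivial rule overlap) becomes essential, since it guarantees that such a merged primitive is indecomposable and hence cannot be cut by $\Delta$. I would argue that the matches $\tMatchGT{T}{T'}{}$ are in bijection with pairs of matches on the two halves of any $\Delta$-splitting, organized as an unshuffle, using that composition along trivial overlap $\uplus$ is exactly the monoidal product $\odot$ and that $\tMatchGT{}{}{}$ is additive over $\uplus$-decompositions of its arguments.

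Conditions (ii)--(iv) are essentially immediate: (iv) holds since $\varepsilon(\eta(k)) = \mathit{coeff}_{\hat T_{\mIO}}(k\cdot\hat T_{\mIO}) = k$; (iii) holds because the only tracelet $\equiv_N$-equivalent to $T_{\mIO}$ is $T_{\mIO}$ itself (its index set of primitives is empty, with the convention $\uplus_{i\in\emptyset}T_i = T_{\mIO}$), so $\Delta(\hat T_{\mIO}) = \hat T_{\mIO}\otimes\hat T_{\mIO}$; and (ii) holds because a composite $\TcompGT{T}{\mu}{T'}{}$ equals $T_{\mIO}$ in normal form precisely when both $T$ and $T'$ do, so $\varepsilon$ of a product picks out exactly the product of the $\varepsilon$-values. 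The only genuine work, then, is the combinatorial matching argument for (i), which I expect to handle by the unshuffle bijection sketched above, appealing throughout to the associativity property of tracelet composition recorded earlier in the paper.
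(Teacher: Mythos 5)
Your outline is essentially sound, but be aware that it takes a genuinely different route from the paper's. The paper (an extended abstract) never verifies the four bialgebra axioms by hand: its justification for Theorem~\ref{thm:Tbialg} is the decomposition-space machinery of Sections~\ref{sec:DSRR} and~\ref{sec:DST}. There, the coproduct is the incidence coproduct of the decomposition space $\Z_\bullet$ of tracelets up to shift equivalence, the product comes from the symmetric monoidal structure $\odot$ on $\Z_\bullet$, and the compatibility of $\Delta$ with $\mu$ is then automatic from the general theory of symmetric monoidal decomposition spaces --- Section~\ref{sec:Hopf} merely spells out the resulting structure maps, referring for the direct construction to the rule-diagram Hopf algebra of~\cite{bdgh2016}. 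Your direct verification is precisely that alternative route: it is more elementary and self-contained, at the price of having to do explicitly the combinatorial bookkeeping that the paper's approach absorbs into the single structural statement that $\Z_\bullet$ is a monoidal decomposition space. Your treatment of conditions (ii)--(iv) is correct, and isolating (i) as the only real work is the right assessment.

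Two steps in your sketch need more than you give them. First, your key claim --- that a primitive factor of $\TcompGT{T}{\mu}{T'}{}$ arising from a nontrivial connected overlap is indecomposable, so that constituents of $T$ and $T'$ linked by $\mu$ must land in the same block of any $\Delta$-splitting --- does not follow from Propositions~\ref{prop:TNF} and~\ref{prop:shiftequiv} as you assert. Those give existence and uniqueness of primitive factorizations; the statement that composition along a connected overlap yields a primitive (equivalently, that $\equiv_N$ cannot disconnect causally linked rules) is exactly the nontrivial content of the theorem, and it requires the causality argument behind the unlabelled uniqueness lemma for maximal splittings in Section~\ref{sec:DST} (the analogue of~\cite{bdgh2016}, Thm.~3.2). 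Second, your unshuffle bijection implicitly uses the interchange law: a match $\mu$ that decomposes as $\mu_1\uplus\mu_2$ over a splitting must satisfy that $\TcompGT{T}{\mu}{T'}{}$ is $\equiv_N$-equivalent to the disjoint union of the two partial composites. The rules of the two blocks occur interleaved in different sequential orders on the two sides of this equation, so the identification holds only modulo shift equivalence $\equiv_S$; this is not a technicality but the crux --- without that quotient the bialgebra axiom fails, which is the paper's stated reason for passing from $\Y_\bullet$ to $\Z_\bullet$. Neither point invalidates your plan, but both should be promoted to explicit lemmas before the bijection argument can be carried out.
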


By virtue of the definition of the tracelet normal form, it is evident that
both composition and decomposition of tracelets is compatible with a
filtration structure given by the number of ``connected components''
in the following sense:
\begin{theorem}[Compare~\cite{bdgh2016}, Sec.~3.4 and Thm.~3.2]\label{thm:Tfiltration}
The tracelet bialgebra $(\hat{\cT},\mu,\eta,\Delta,\varepsilon)$ is 
filtered by
\begin{equation}
\hat{\cT}^{(n)}:=\mathit{span}_{\bK}\left.\left\{
\hat{T}_1\uplus\dotsc\uplus \hat{T}_n
\right\vert \hat{T}_1,\dotsc,\hat{T}_n\in \mathit{Prim}(\hat{\cT})
\right\}\,,
\end{equation}
and for this filtration it is connected 
($\hat{\cT}^{(0)}:=\mathit{span}_{\bK}\{\hat{T}_{\mIO}\}$).
In particular, it acquires an antipode and becomes a Hopf algebra.
\end{theorem}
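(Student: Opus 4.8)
The plan is to deduce the Hopf algebra structure from the bialgebra structure established in Theorem~\ref{thm:Tbialg} by exhibiting a connected filtration, since it is a standard fact that any connected filtered bialgebra over a field admits a (necessarily unique) antipode and is therefore a Hopf algebra. Concretely, I would first verify that the proposed subspaces $\hat{\cT}^{(n)}$ do form an exhaustive, increasing filtration: $\hat{\cT}^{(0)}\subseteq \hat{\cT}^{(1)}\subseteq\dotsb$ with $\bigcup_{n\geq0}\hat{\cT}^{(n)}=\hat{\cT}$. The inclusion $\hat{\cT}^{(0)}=\mathit{span}_{\bK}\{\hat{T}_{\mIO}\}\subseteq \hat{\cT}^{(1)}$ follows by allowing the empty disjoint union (recall the convention $\uplus_{i\in\emptyset}T_i:=T_{\mIO}$), and exhaustiveness is immediate from Proposition~\ref{prop:TNF}, which guarantees that every tracelet decomposes as a finite disjoint union of primitives and hence lands in some $\hat{\cT}^{(n)}$.

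Next I would check the two filtration compatibility conditions. For multiplicativity, $\mu(\hat{\cT}^{(m)}\otimes \hat{\cT}^{(n)})\subseteq \hat{\cT}^{(m+n)}$, the point is that tracelet composition $\tapGT{\hat{T}}{\hat{T}'}{}$ sums over matches $\mu\in\tMatchGT{T}{T'}{}$, and by the definition of the normal form via $\equiv_N$ the resulting composite decomposes into primitives whose total count is at most the sum of the counts of the factors (an overlap can only merge connected components, never increase their number). For comultiplicativity, $\Delta(\hat{\cT}^{(n)})\subseteq\sum_{p+q=n}\hat{\cT}^{(p)}\otimes\hat{\cT}^{(q)}$, I would observe directly from the deconcatenation formula that $\Delta(\hat{T})$ splits the index set $I$ into $X$ and $I\setminus X$, so that the two tensor factors have $|X|$ and $|I\setminus X|$ primitive components respectively, with $|X|+|I\setminus X|=n$. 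Both facts rest on the multiplicativity of the ``number of connected components'' grading, which is already built into how $\mu$, $\Delta$, and the normal form interact.

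The remaining ingredient is connectedness: I would show $\hat{\cT}^{(0)}=\mathit{span}_{\bK}\{\hat{T}_{\mIO}\}$ is one-dimensional and spanned by the unit, so that the associated graded piece in degree zero is just the ground field $\bK$. Given a connected filtered bialgebra, the antipode $S$ is then constructed by the usual recursive formula on the filtration degree, $S(\hat{T})=-\hat{T}-\sum S(\hat{T}')\hat{T}''$ over the reduced coproduct, which is well defined precisely because the reduced coproduct strictly lowers filtration degree (a consequence of connectedness) and terminates by induction. I would invoke the standard theorem (e.g.\ as in \cite{manchon2008hopf}) rather than rederiving the antipode recursion in full.

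The main obstacle I anticipate is the multiplicativity estimate for $\mu$, namely verifying that composing a tracelet with $m$ primitive components and one with $n$ primitive components along an admissible overlap yields a composite whose normal form has \emph{at most} $m+n$ primitive pieces. This requires understanding how the match $\mu$ and the subsequent reduction under $\equiv_N$ (in particular the shift equivalence $\equiv_S$ and removal of trivial sub-tracelets via $\equiv_T$) affect the connected-component count; one must rule out the possibility that composition spuriously creates new indecomposable pieces. This is the step where the specific combinatorial content of tracelet composition, as opposed to the purely formal bialgebra bookkeeping, genuinely enters, and it is the analogue of the connectedness argument in~\cite{bdgh2016}, Sec.~3.4.
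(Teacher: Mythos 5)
Your proposal is correct and takes essentially the same route as the paper, which gives no detailed proof but asserts (in the text preceding the theorem) that compatibility of $\mu$ and $\Delta$ with the filtration is evident from the tracelet normal form of Proposition~\ref{prop:TNF}, and then invokes the standard fact that a connected filtered bialgebra acquires an antipode (cf.~\cite{manchon2008hopf} and the analogous argument in~\cite{bdgh2016}). You also correctly isolate the only substantive point the paper leaves implicit --- that composition along an overlap can only merge primitive pieces, so the composite lies in $\hat{\cT}^{(m+n)}$; the sole cosmetic slip is that the literal spans $\hat{\cT}^{(n)}$ are not nested (primitives exclude $T_{\mIO}$), so the increasing filtration should be taken as the cumulative sums $\sum_{k\leq n}\hat{\cT}^{(k)}$ rather than via an ``empty disjoint union'' inside $\hat{\cT}^{(1)}$.
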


Finally, yet again taking inspiration from~\cite{bdgh2016}, one may demonstrate that the tracelet Hopf algebra is isomorphic to a Hopf algebra that is well-known in the setting of the Heisenberg--Weyl diagram Hopf algebra and the Poincar\'{e}--Birkhoff--Witt theorem for ``normal-ordering'' of elements of the Hopf algebra:
\begin{theorem}
Let $\cL_{\cT}:=(\mathit{Prim}(\hat{\cT}),[.,.]_{\diamond})$ denote the \emph{tracelet Lie algebra}, where $[\hat{T},\hat{T}']_{\diamond}:=\hat{T}\diamond\hat{T}'-\hat{T}'\diamond\hat{T}$ is the \emph{commutator operation} (w.r.t.\ $\diamond$). Then the tracelet Hopf algebra is isomorphic (in the sense of Hopf algebra isomorphisms) to the \emph{universal enveloping algebra} of $\cL_{\cT}$.
\end{theorem}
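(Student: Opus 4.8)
The plan is to invoke the Milnor--Moore / Cartier--Quillen version of the Poincar\'e--Birkhoff--Witt theorem, which states that any connected, graded (or filtered), cocommutative Hopf algebra over a field of characteristic zero is naturally isomorphic to the universal enveloping algebra of its Lie algebra of primitives. Since Theorem~\ref{thm:Tfiltration} already establishes that $(\hat{\cT},\mu,\eta,\Delta,\varepsilon)$ is a connected filtered Hopf algebra, and the coproduct $\Delta$ was shown to be cocommutative, the structural hypotheses of Milnor--Moore are met provided $\bK$ has characteristic zero --- which is guaranteed by the standing assumption that $\bK\in\{\bR,\bC,\bQ\}$. Thus the bulk of the proof is not an ad hoc computation but rather a verification that our setting fits the hypotheses of a classical theorem, followed by identifying the relevant Lie algebra as precisely $\cL_{\cT}$.

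First I would verify that $\mathit{Prim}(\hat{\cT})$, the span of primitive tracelets, coincides with the space of Hopf-algebra-theoretic primitive elements, i.e.\ those $x$ with $\Delta(x)=x\otimes\hat{T}_{\mIO}+\hat{T}_{\mIO}\otimes x$. This follows directly from the deconcatenation form of $\Delta$: for $T\equiv_N\biguplus_{i\in I}T_i$ a tracelet normal form, $\Delta(\hat{T})$ sums over subsets $X\subset I$, so the only terms with a trivial tensor factor are $X=\emptyset$ and $X=I$; hence $\hat{T}$ is primitive in the coalgebraic sense exactly when $\lvert I\rvert=1$, i.e.\ when $T$ is a primitive tracelet in the sense of $\mathfrak{Prim}(\cT_N)$. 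Next I would check that $\mathit{Prim}(\hat{\cT})$ is closed under the commutator $[\hat{T},\hat{T}']_{\diamond}=\hat{T}\diamond\hat{T}'-\hat{T}'\diamond\hat{T}$; this is automatic because in any bialgebra the primitives form a Lie subalgebra under the commutator of the product, so $\cL_{\cT}=(\mathit{Prim}(\hat{\cT}),[.,.]_{\diamond})$ is genuinely a Lie algebra and the hypotheses are internally consistent.

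Granting those two points, the universal property of $U(\cL_{\cT})$ gives a canonical Hopf algebra map $\Phi:U(\cL_{\cT})\to\hat{\cT}$ extending the inclusion of primitives, and Milnor--Moore asserts this $\Phi$ is an isomorphism. I would remark that surjectivity is the transparent half: by the tracelet normal form (Proposition~\ref{prop:TNF}) every basis vector $\hat{T}$ is a product $\hat{T}_1\diamond\cdots\diamond\hat{T}_n$ of primitives up to the relation $\equiv_N$, hence lies in the image of $\Phi$. Injectivity is the content of PBW proper and is where the characteristic-zero and connectivity hypotheses genuinely do the work.

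The main obstacle --- or rather the main thing that must be pinned down rather than waved away --- is confirming that the abstract Lie algebra structure supplied by Milnor--Moore agrees with the concrete commutator $[.,.]_{\diamond}$ defined via the tracelet product $\diamond$, and that the filtration of Theorem~\ref{thm:Tfiltration} is compatible with the PBW filtration on $U(\cL_{\cT})$ so that the associated graded comparison can be carried out degree by degree. Concretely, one must show that the filtration degree $n$ (number of connected primitive components) matches the number of Lie-monomial factors, so that $\mathrm{gr}\,\Phi$ becomes the symmetrization isomorphism $\mathrm{Sym}(\cL_{\cT})\xrightarrow{\sim}\mathrm{gr}\,\hat{\cT}$; this is exactly the statement that the associated graded of $\hat{\cT}$ is the free commutative algebra on the primitives, which is immediate from the deconcatenation coproduct and the cocommutativity established earlier. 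Once $\mathrm{gr}\,\Phi$ is an isomorphism, a standard filtered-algebra argument lifts this to an isomorphism of $\Phi$ itself, completing the proof.
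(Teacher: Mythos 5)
Your proposal is correct and takes essentially the same route as the paper: the paper's own (sketched) argument is precisely to invoke the Poincar\'e--Birkhoff--Witt / Milnor--Moore theorem, applied to the connected, filtered, cocommutative Hopf algebra structure established in Theorem~\ref{thm:Tfiltration} over the characteristic-zero field $\bK$, following the analogous result of~\cite{bdgh2016}. The details you supply --- identifying $\mathit{Prim}(\hat{\cT})$ with the coalgebra-theoretic primitives of the deconcatenation coproduct, and matching the filtration by number of primitive components with the PBW filtration on $U(\cL_{\cT})$ --- are exactly the verifications the paper leaves implicit.
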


\begin{figure}
\centering
\subfigure[1-simplices\label{fig:r1s}]{
\href{http://nicolasbehr.com/files/asymptote3d/RDS1simplex.html}{%
\includegraphics[scale =0.7]{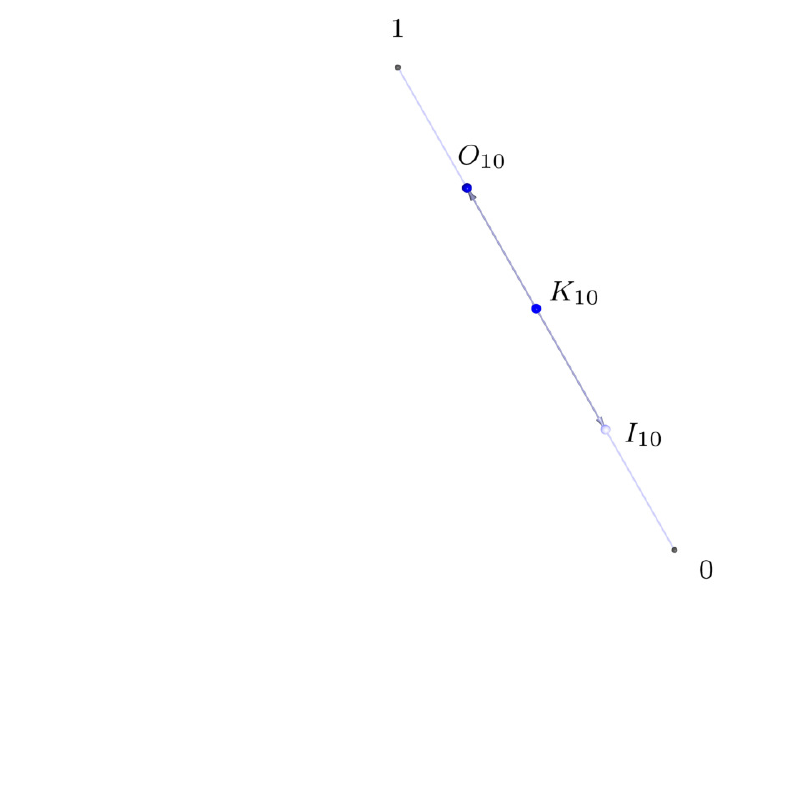}}
}$\qquad\qquad$ \subfigure[2-simplices\label{fig:r2s}]{
\href{http://nicolasbehr.com/files/asymptote3d/RDS2simplex.html}{%
\includegraphics[scale =0.7]{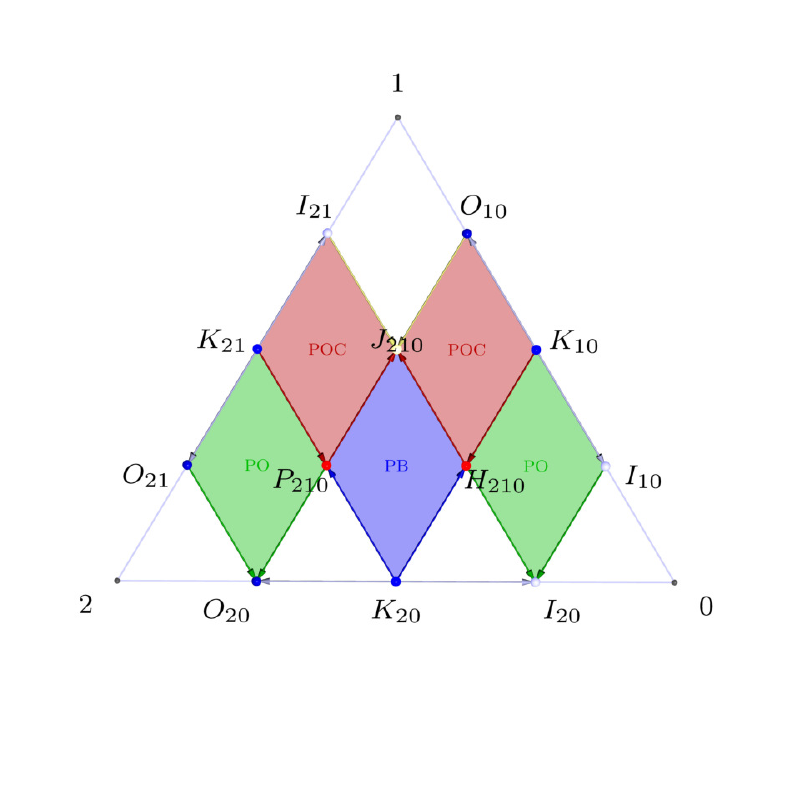}}
}\\
\subfigure[From adjacent 2-simplices\ldots \label{fig:rds3-prepA}]{
\href{http://nicolasbehr.com/files/asymptote3d/RDS2to3simplexA.html}{%
\includegraphics[scale =0.7]{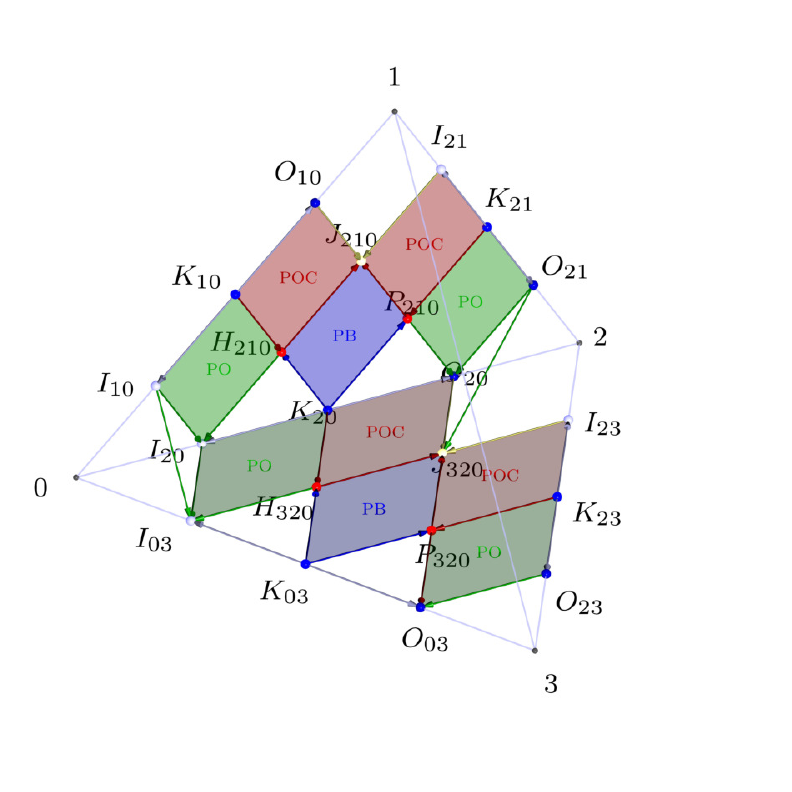}}
}$\qquad\qquad$ \subfigure[\ldots via the Concurrency Theorem\ldots\label{fig:rds3-prepB}]{
\href{http://nicolasbehr.com/files/asymptote3d/RDS2to3simplexB.html}{%
\includegraphics[scale =0.7]{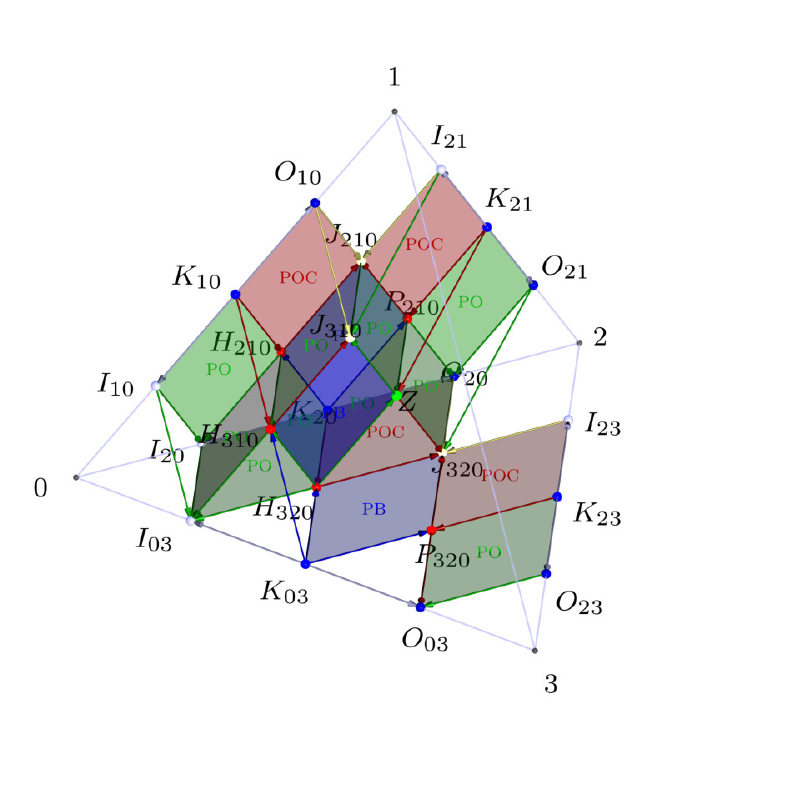}}
}\\
\subfigure[\ldots to 3-simplices.\label{fig:r3s}]{
\href{http://nicolasbehr.com/files/asymptote3d/RDS3simplex.html}{%
\includegraphics[scale =0.7]{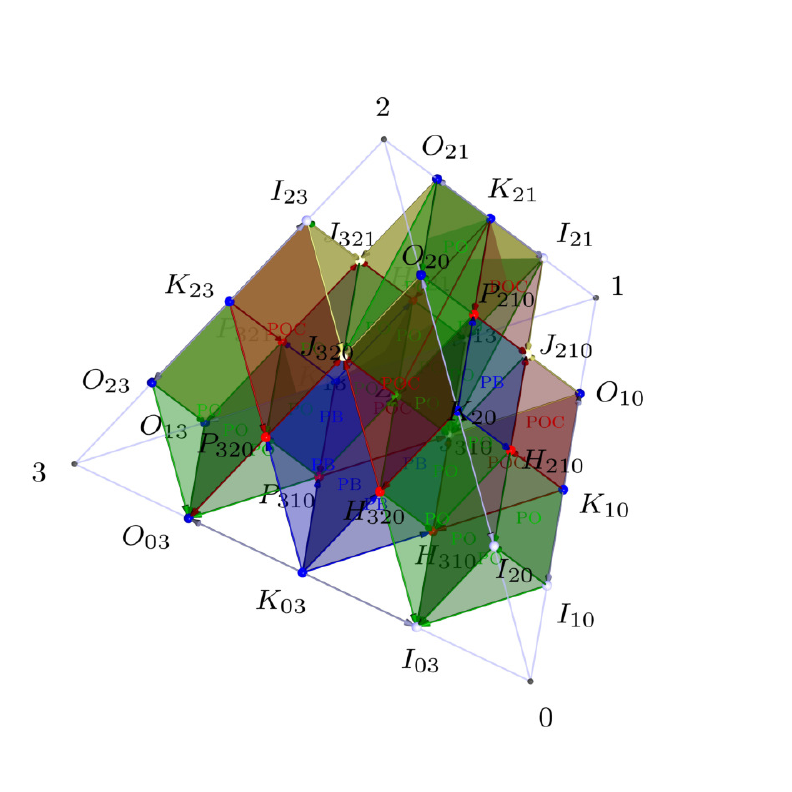}}
}$\qquad\qquad$ \subfigure[Tracelets of length 3 and evaluation\label{fig:rTot}]{
\href{http://nicolasbehr.com/files/asymptote3d/Tracelet-3simplex-and-evaluation.html}{%
\includegraphics[scale =0.7]{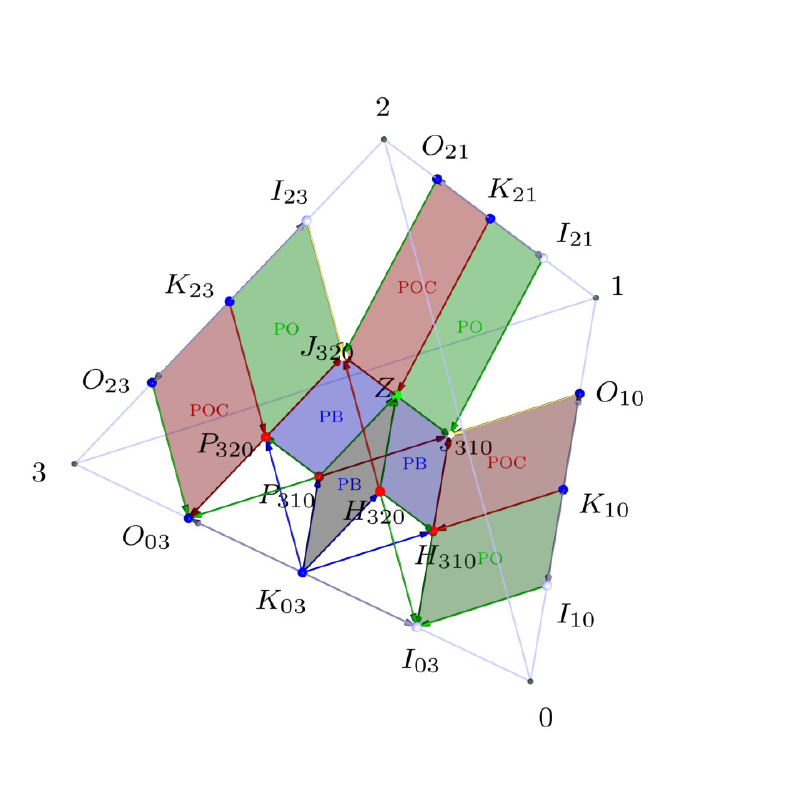}}
}
\caption{Elements of Tracelet Decomposition Space theory (\emph{Note:} in order to allow for a more in-detail inspection, the figures are hyperlinked to on-line interactive 3D views of the respective diagrams).}
\end{figure}

\end{document}